\newcites{App}{Appendix}
\newtheorem{definition}{Definition}
\newtheorem{assumption}{Assumption}
\newtheorem{lemma}{Lemma}
\DeclareMathOperator*{\argmin}{arg\,min}
\title{MolFM: A Multimodal Molecular Foundation Model}
\author{%
  Yizhen Luo$^{1}$,  Kai Yang$^{1}$, Massimo Hong$^{1,2}$, Xing Yi Liu$^{1}$, Zaiqing Nie$^{1,}$\thanks{Corresponding author}\\
  $^{1}$Institute of AI Industry Research (AIR), Tsinghua University\\
  $^{2}$Department of Computer Science and Technology, Tsinghua University\\
  \texttt{\{yz-luo22,hongcd21\}@mails.tsinghua.edu.cn}\\
  \texttt{liuxingyi99@gmail.com}\qquad \texttt{\{yangkai,zaiqing\}@air.tsinghua.edu.cn}     \\
  % examples of more authors
  % Address \\
  % \texttt{email} \\
  % \AND
  % Coauthor \\
  % Affiliation \\
  % Address \\
  % \texttt{email} \\
  % \And
  % Coauthor \\
  % Affiliation \\
  % Address \\
  % \texttt{email} \\
  % \And
  % Coauthor \\
  % Affiliation \\
  % Address \\
  % \texttt{email} \\
}
\begin{document}

\maketitle

\begin{abstract}
  Molecular knowledge resides within three different modalities of information sources: molecular structures, biomedical documents, and knowledge bases. Effective incorporation of molecular knowledge from these modalities holds paramount significance in facilitating biomedical research. However, existing multimodal molecular foundation models exhibit limitations in capturing intricate connections between molecular structures and texts, and more importantly, none of them attempt to leverage a wealth of molecular expertise derived from knowledge graphs. In this study, we introduce MolFM, a multimodal molecular foundation model designed to facilitate joint representation learning from molecular structures, biomedical texts, and knowledge graphs. We propose cross-modal attention between atoms of molecular structures, neighbors of molecule entities and semantically related texts to facilitate cross-modal comprehension. We provide theoretical analysis that our cross-modal pre-training captures local and global molecular knowledge by minimizing the distance in the feature space between different modalities of the same molecule, as well as molecules sharing similar structures or functions. MolFM achieves state-of-the-art performance on various downstream tasks. On cross-modal retrieval, MolFM outperforms existing models with 12.13\% and 5.04\% absolute gains under the zero-shot and fine-tuning settings, respectively. Furthermore, qualitative analysis showcases MolFM's implicit ability to provide grounding from molecular substructures and knowledge graphs. Code and models are available on \url{https://github.com/BioFM/OpenBioMed}.
 %Code and models are available at \url{https://github.com/BioFM/OpenBioMed}.
\end{abstract}

\section{Introduction}
% AI methods and the importance of incorporating structure, text and knowledge
The understanding of molecular properties and functions is of great significance to broad biomedical applications. Molecular knowledge resides within three multimodal information sources, namely molecular structures, biomedical documents and knowledge bases. Recent advances in Vision-and-Language Pre-training (VLP) \cite{lu2019vilbert,jia2021scaling,radford2021learning,li2022supervision,li2021align,dou2022empirical,VL-BERT} have sparked the emergence of pre-trained multimodal molecular foundation models that jointly learn molecular representations from structures and semantically-related texts. These approaches can be categorized as follows: (1) Generative models, exemplified by KV-PLM \cite{zeng2022deep} and MolT5 \cite{edwards2022translation}, which treat the SMILES string of molecules and texts in a unified model with an auto-encoding framework. (2) Contrastive models, including MoMu \cite{su2022molecular} and MoleculeSTM \cite{liu2022multi}, which conduct contrastive learning with the structural and textual representations of molecules. 

% recent works: generative or contrastive, and their shortcomings

Despite their promising advancements, existing multimodal molecular foundation models suffer from the following key limitations: (1) They fail to fully exploit and fuse the available structural and text information. Generative models primarily rely on 1D SMILES strings to capture structural characteristics and therefore lack the ability to interpret complex topological and spatial properties like macrocycles \cite{zeng2022deep, nguyen2021graphdta}. Contrastive models, on the other hand, tend to overlook the intricate connections between text snippets and substructures of molecules. (2) Existing models predominantly focus on local-level domain knowledge from individual molecules and neglect crucial global-level domain knowledge from knowledge bases. In fact, it has been widely accepted that incorporating global-level knowledge including relationships among molecules, target ligands, diseases and other biomedical entities could greatly facilitate biomedical research \cite{callahan2020knowledge, nicholson2020constructing, luo2023empowering}. 

% our method
In this work, we propose MolFM, a multimodal molecular foundation model, to address the aforementioned problems. We aim to conduct joint molecular representation learning that captures both the local knowledge between molecular structures and biomedical texts, as well as the global knowledge from knowledge bases. To accomplish this goal, we first encode 2D molecular graphs, biomedical texts and knowledge graphs independently with pre-trained single-modal encoders. Then, we introduce a multimodal encoder to holistically fuse the features with cross-modal attention between atoms of the molecular structure, neighbors within the knowledge graph and tokens in the textual description. We incorporate structure-text contrastive (STC), cross-modal matching (CMM), masked language model (MLM) and knowledge graph embedding (KGE) as pre-training objectives. More importantly, we provide theoretical justifications that our multimodal pre-training could be interpreted as minimizing the distance in the feature space between different modalities of the same molecule, as well as between molecules that share similar structures or functions. %We also release a molecule-centric knowledge graph to facilitate future research in pre-training multimodal molecular foundation models. 

% our results
We manifest the outstanding performance of MolFM on various downstream tasks. On cross-modal retrieval \cite{edwards2021text2mol, zeng2022deep, su2022molecular}, MolFM achieves absolute gains of 12.13\% and 5.04\% under zero-shot and fine-tuning settings, respectively, compared to the state-of-the-art method MoMu \cite{su2022molecular}. On molecule captioning and text-based molecule generation \cite{edwards2022translation}, we show that MolFM generates more accurate molecules and text descriptions through quantitative and qualitative studies. On molecular property prediction \cite{wu2018moleculenet}, MolFM boosts the prediction performance by 1.55\% absolute gain on average by incorporating multimodal data. We also provide visualization of cross-modal attention, which reveals MolFM's potential to perform grounding based on molecular sub-structures and knowledge graphs.

% our contributions
Our contributions are summarized as follows: (1) We propose MolFM, a multimodal molecular foundation model designed to facilitate joint representation learning from molecular structures, biomedical texts, and knowledge graphs through fine-grained cross attention between different modalities. (2) We theoretically justify that our pre-training approach implicitly minimizes the distance in the feature space between different modalities of the same molecule, as well as between molecules with similar structures or functions. (3) We show the state-of-the-art performance of MolFM on various downstream tasks, thereby highlighting its efficacy and versatility.
\section{Related works}
Our work is connected to the following research topics:

%\textbf{Vision-and-Language Pre-training} (VLP) aims to learn multi-modal representations by jointly pre-training on large-scale image-text pairs. Existing works fall in two categories. The first leverages the dual encoder architecture \cite{lu2019vilbert,jia2021scaling,radford2021learning,li2022supervision} to process images and texts independently, which have shown promising results in image-text retrieval. The second category \cite{li2021align,dou2022empirical,VL-BERT} leverages the fusion encoder architecture to integrate image and text representations in a unified model, which excels in complex multi-modal tasks like VQA \cite{antol2015vqa}. Extending from \cite{li2021align}, KEM4 learns multi-modal molecular representations simultaneously by cross attention between structures, texts and knowledge bases.
\textbf{Molecular foundation models.} Due to insufficient supervised data in the biomedical domain, molecular foundation models that conduct pre-training on large-scale unsupervised molecules have been developed. Most existing works primarily focus on a single modality of molecules. One line of research aims to learn molecular knowledge from structural representations such as 1D SMILES strings \cite{chithrananda2020chemberta, irwin2022chemformer}, 2D molecular graphs \cite{hustrategies,you2020graph,wang2022molecular,graphmvp} or 3D geometry views \cite{liu2021pre, zhu2022unified, stark20223d}. Another line attempts to implicitly capture molecular expertise through comprehending biomedical literature \cite{beltagy2019scibert, lee2020biobert, gu2021domain, wei2016assessing}. 

More recently, several multimodal approaches \cite{zeng2022deep, edwards2022translation, su2022molecular, liu2022multi} that jointly learn molecular representations from molecular structures and biomedical texts have been proposed. For example, KV-PLM \cite{zeng2022deep} and MolT5 \cite{edwards2022translation} treat SMILES strings and texts as two different languages and perform pre-training with auto-encoding objectives \cite{devlin2019bert, raffel2020exploring}. MoMu \cite{su2022molecular} and MoleculeSTM \cite{liu2022multi} encode molecular graphs and texts with independent encoders and conduct cross-modal contrastive learning \cite{radford2021learning, li2022supervision}. Different from these models, MolFM connects molecular expertise from three modalities, namely structures, texts and knowledge graphs, enabling a more holistic understanding of molecules.  

\textbf{Knowledge-empowered deep learning for molecules.} The incorporation of domain knowledge has shown significant efficacy in various molecule-related tasks, including drug-drug interaction prediction \cite{zhang2017predicting}, drug-target binding affinity prediction \cite{thafar2020dtigems+, ye2021unified}, and molecular property prediction \cite{, luo2023empowering}. However, there are only a few attempts in knowledge-enhanced molecular foundation models. Existing works include MoCL \cite{sun2021mocl} that employs substructure perturbation knowledge and structural similarity knowledge to generate positive samples for contrastive learning. Similarly, KCL \cite{fang2022molecular} augments 2D molecular graphs with the guidance of chemical element knowledge. In contrast, MolFM treats knowledge graphs as an additional input modality instead of a tool to generate structural augmentations. Furthermore, MolFM focuses on capturing richer global knowledge of molecules, such as their relationships with other compounds, target ligands or diseases.
%\textbf{Knowledge-enhanced pre-training} aims to improve the comprehension capabilities of foundation models by injecting domain knowledge during pre-training. These works leverage knowledge graphs by integrating knowledge graph embeddings \cite{ERNIE, peters2019knowledge}, designing supervision signals \cite{lauscher2019informing, KAdapter, wang2021kepler} or conducting joint representation learning \cite{xiongpretrained,sun2020colake}. Unlike these approaches, KEM4 learns and fuses knowledge graph representations from the perspective of an individual modality. 

\begin{figure}[tpb]
\centering
    \includegraphics[width=\linewidth]{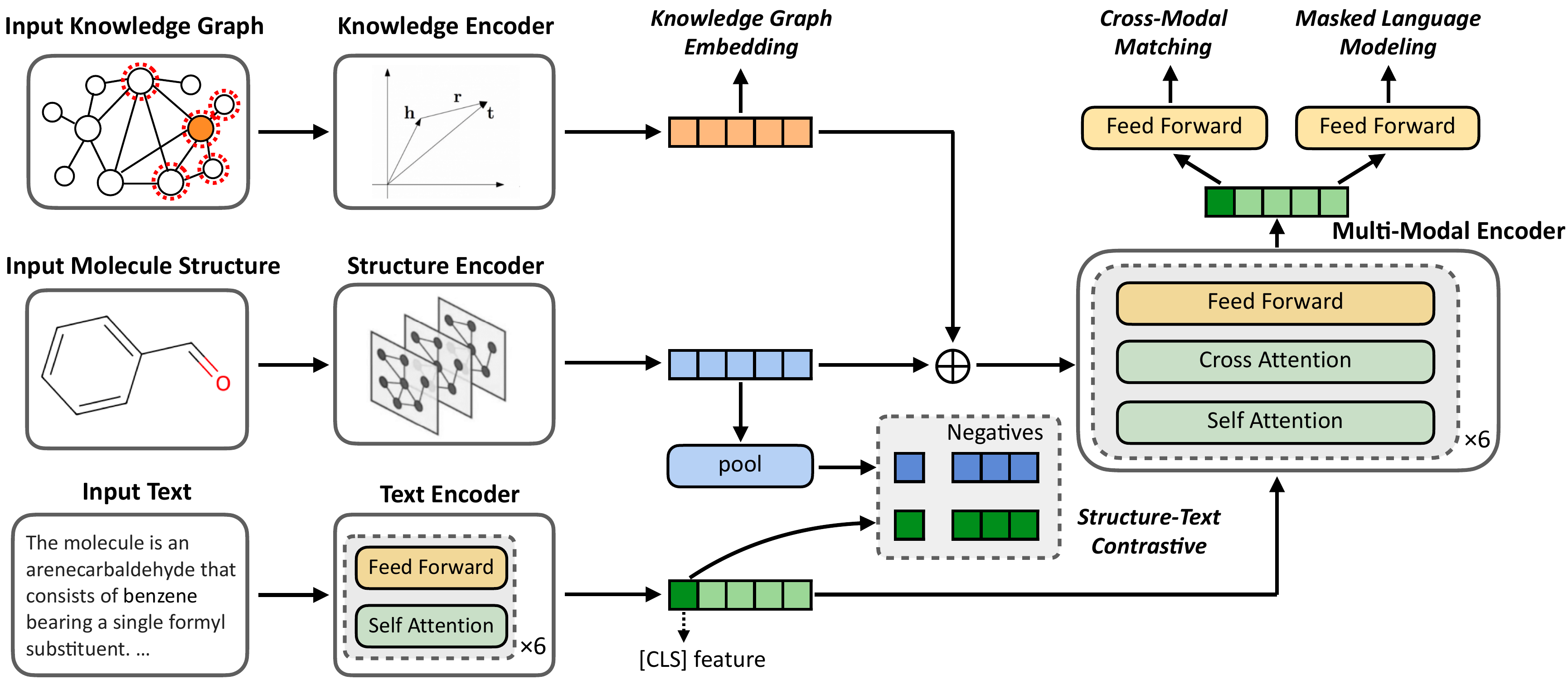}
\captionsetup{font={small,stretch=0.95}}
\caption{\textbf{Pre-training pipeline of MolFM.} We formulate the knowledge graph input for each molecule (dashed circle) as the corresponding entity (orange node) and its 1-hop neighbors. MolFM employs three independent single-modal encoders to convert multimodal inputs into feature vectors. Additionally, it comprises a multimodal encoder to integrate fine-grained connections between atoms, neighboring entities and textual tokens. We leverage structure-text contrastive learning to align the feature space between two modalities, cross-modal matching loss and masked language modeling loss to promote a holistic understanding of multimodal information, and a knowledge embedding loss as a regularization term.}
\label{fig:pipeline}
\end{figure}
\section{MolFM Pre-training}
In this section, we start with a brief introduction to our model architecture (Sec. \ref{sec:model}), followed by the multimodal pre-training objectives (Sec. \ref{sec:pretrain}). Then, we provide theoretical justifications for our approach from the perspective of deep metric learning (Sec. \ref{sec:theory}). Finally, we describe our pre-training dataset and knowledge graph (Sec. \ref{sec:pretrain_data}), as well as implementation details (Sec. \ref{sec:implementation}).

\subsection{Model architecture\label{sec:model}}
The model architecture of MolFM is illustrated in Fig. \ref{fig:pipeline}. MolFM aims to learn a joint representation from molecular structure $S$, biomedical text $T$ and knowledge graph input $K$. We formalize $S$ as a 2D molecular graph $\mathcal{G}=(\mathcal{V}, \mathcal{E})$ where $\mathcal{V}$ represents atoms and $\mathcal{E}$ represents bonds, and $T$ as a sequence of $L$ tokens. We define the overall knowledge graph $KG$ as a graph containing entities as nodes and relations as edges. $KG$ is represented by a set of triplets $\{(h, r, t)\}$ where $h$ and $t$ are head and tail entities, and $r$ is the relation type. Considering that biomedical texts often contain co-occurring mentions of entities related to the molecule \cite{wei2016assessing}, we formulate $K$ as the corresponding molecular entity in $KG$ and $N$ randomly sampled entities from its one-hop neighbors. In this way, $K$ comprises richer information from knowledge graphs to facilitate further multimodal pre-training.

MolFM utilizes three independent encoders pre-trained on single-modality data to encode inputs from different modalities. The molecular graph encoder employs a 5-layer GIN \cite{xu2018powerful} initialized with the weights from GraphMVP \cite{graphmvp} to obtain node representations $h_{SA}$ for atoms and a graph representation $h_{SM}$ for the entire molecule. The text encoder adopts a 6-layer transformer \cite{vaswani2017attention} initialized with the first 6 layers of KV-PLM \cite{zeng2022deep} to generate token features $h_T$. The knowledge graph encoder implements a TransE \cite{bordes2013translating} model, which has been trained on $KG$ for 500 epochs, to compute knowledge features $h_{K}$ for each entity in $K$.

Inspired by \cite{li2021align}, we introduce a multimodal encoder composed of 6 transformer layers with cross attention at each layer. The multimodal encoder is initialized as the last 6 layers of KV-PLM. The cross attention module performs multimodal fusion using token features $h_{T}$ as queries and the concatenation of atom features $h_{SA}$ and neighbor features $h_{K}$ as keys and values.

\subsection{Pre-training objectives\label{sec:pretrain}}
Our pre-training procedure contains 4 objectives: structure-text contrastive loss (STC), cross-modal matching (CMM), masked language modeling (MLM) and knowledge graph embedding (KGE). 

\textbf{Structure-text contrastive loss} aims to align the feature space of structure and text encoders and further facilitate multimodal understanding. We apply fully-connected layers and $l2$ normalization to obtain structural representation $z_S$ from $h_{SM}$ and textual representation $z_T$ from $h_T^{[cls]}$ (the textual feature of the \textit{[CLS]} token). Then, we optimize the following cross-modal contrastive loss \cite{radford2021learning}:
\begin{equation}
    \mathcal{L}_{stc}=-\frac{1}{2}\left[\log\frac{\exp(s(z_S,z_T)/\tau)}{\sum_{S'\in B}\exp(s(z_{S'},z_T)/\tau)}+\log\frac{\exp(s(z_S, z_T)/\tau)}{\sum_{T'\in B}\exp(s(z_S,z_{T'})/\tau)}\right],
\end{equation}
where $s(\cdot, \cdot)$ refers to cosine similarity, $B$ consists of molecular structures and texts within the same mini-batch, and $\tau$ is a temperature hyper-parameter. %We omit contrastive learning with knowledge graph embeddings because we empirically observe that it hinders structural encoder and text encoder's ability to capture local molecular characteristics.

\textbf{Cross-modal matching loss} aims to promote a deeper understanding of molecules by predicting whether the structure, text and knowledge graph data correspond to the same molecule. We randomly permute the multimodal inputs in the mini-batch to create negative samples. We obtain the representation of the $[CLS]$ token from our multimodal encoder $\mathcal{M}_\theta$, and feed it into the predictor $p_{cmm}$ composed of a fully-connected layer and softmax activation. CMM optimizes the following loss:
\begin{equation}
\label{eq:cmm}
    \mathcal{L}_{cmm}=\sum_{(\tilde{S}, \tilde{T}, \tilde{K})\in \tilde{B}}H\left[y_{cmm}(\tilde{S},\tilde{T},\tilde{K}),\ p_{cmm}(\mathcal{M}_{\theta}(h_{\tilde{S}}, h_{\tilde{T}}, h_{\tilde{K}}))\right],
\end{equation}
where $\tilde{B}$ is the corrupted mini-batch with $y_{cmm}(\tilde{S}, \tilde{T}, \tilde{K})$ indicating whether the multimodal data from the mini-batch correspond to the same molecule. $H(\cdot, \cdot)$ denotes cross entropy.

\textbf{Masked language modeling loss} aims to predict the masked tokens using information from three modalities. We adopt the same masking strategy as BERT \cite{devlin2019bert} to generate the masked text $\hat{T}$, and minimize the following objective:
\begin{equation}
    \mathcal{L}_{mlm}=H[y_{mlm}(\hat{T}), p_{mlm}(\mathcal{M}_{\theta}(h_S, h_{\hat{T}}, h_K))],
\end{equation}
where $p_{mlm}$ predicts the probability for masked tokens, and $y_{mlm}(\hat{T})$ is the one-hot ground truth.

\textbf{Knowledge graph embedding loss} serves as a regularization term to prevent the knowledge graph representations from catastrophic forgetting \cite{kirkpatrick2017overcoming}. We randomly sample a positive triplet $(h, r, t)$ from $KG$ for each entity $h$ in $K$. Then we generate two negative triplets $(h, r, \tilde{t})$ and $(\tilde{h}, r, t)$ by randomly sampling $\tilde{t},\tilde{h}$ from all entities and optimize the following max-margin loss: 
\begin{equation}
    \label{eq:kge}
    \mathcal{L}_{kge}=\sum_{h\in K}\left[\max(0, d(h,r,t)-d(h,r,\tilde{t})+\Delta)+\max(0, d(h,r,t)-d(\tilde{h},r,t)+\Delta)\right],
\end{equation}
where $d(h,r,t)=\|f(h)+g(r)-f(t)\|_2$, and $\Delta$ is a margin hyper-parameter. We use $f$ and $g$ to denote embedding functions for entities and relations of our TransE model.

MolFM pre-training optimizes the sum of the aforementioned objectives where $\mathbb{E}[\cdot]$ is expectation:
\begin{equation}
    \mathcal{L}=\mathbb{E}_{(S,T,K)}\left[\mathcal{L}_{stc}+\mathcal{L}_{cmm}+\mathcal{L}_{mlm}+\mathcal{L}_{kge}\right].
\end{equation}
%where $\mathbb{E}$ denotes expectation.
\subsection{Theoretical justifications\label{sec:theory}}
% In this section, we interpret our pre-training objectives from the perspective of deep metric learning between different views of molecules. The validity of STC and MLM has been discussed in \cite{li2021align}, and we further show the following important arguments: 
The relationship between conventional multimodal pre-training objectives (STC and MLM) and mutual information maximization has been studied in previous works \cite{oord2018representation, li2021align}. In this section, we interpret CMM and KGE from the perspective of deep metric learning \cite{hoffer2015deep, kaya2019deep} with a brief introduction to our major findings, and defer readers to Appendix \ref{sec:kge} for detailed proofs.

\textbf{CMM learns a fine-grained metric between the multimodal representations of the same molecule.}  We show that $\mathcal{L}_{cmm}$ in Eq. \ref{eq:cmm} satisfies the following:
\begin{equation}
\label{eq:cmm_rewrite}
    \mathcal{L}_{cmm}\propto \sum_{(\tilde{S},\tilde{T},\tilde{K})\in \tilde{B}}\left[-p_{cmm}(\mathcal{M}_\theta(h_S,h_T,h_K)) + p_{cmm}(\mathcal{M}_\theta(h_{\tilde{S}}, h_{\tilde{T}}, h_{\tilde{K}}))\right].
\end{equation}
Eq. \ref{eq:cmm_rewrite} conceptualizes that the multimodal encoder and the CMM predictor compose a scoring function which assigns higher scores to matched structure-text-knowledge triplets and lower scores for unmatched triplets. Therefore, we conclude that CMM further aligns the feature space of three modalities and captures the intrinsic connections between multimodal features.

\textbf{KGE minimizes the distance between molecules sharing similar structures and functions.} We formulate the max-margin loss in Eq. \ref{eq:kge} as a function of the positive triplet $(h,r,t)$. Then, we present two lemmas for structurally and functionally similar molecules in the following:

\begin{lemma}
    \label{lem:structure}
    Let $r_{s}$ be a \textbf{symmetric} relation indicating structural similarity. Assuming that structurally similar molecules $h$ and $t$ satisfies $(h,r_{s},t)\in KG$ and $(t,r_{s},h)\in KG$, the following holds:
    \begin{equation}
        \mathcal{L}_{kge}(h,r_s,t)\propto 2\|f(h)-f(t)\| - \|f(h)-f(\tilde{t})\| - \|f(\tilde{h})-f(t)\|.
    \end{equation}
\end{lemma}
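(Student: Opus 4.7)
The plan is to exploit the symmetry of $r_s$ to collapse the TransE relation embedding $g(r_s)$ to (approximately) zero, thereby reducing the translation distance $d(\cdot,r_s,\cdot)$ to the pure Euclidean distance between entity embeddings, and then to simplify the max-margin loss accordingly.

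First I would use the symmetry assumption. Since both $(h,r_s,t)$ and $(t,r_s,h)$ belong to $KG$, the TransE training signal pushes the model to satisfy $f(h)+g(r_s)\approx f(t)$ and simultaneously $f(t)+g(r_s)\approx f(h)$. Adding the two identities yields $2g(r_s)\approx 0$, hence $g(r_s)\approx 0$. Consequently
\begin{equation*}
d(h,r_s,t)=\|f(h)+g(r_s)-f(t)\|_2\approx \|f(h)-f(t)\|,
\end{equation*}
and analogously $d(h,r_s,\tilde t)\approx \|f(h)-f(\tilde t)\|$ and $d(\tilde h,r_s,t)\approx \|f(\tilde h)-f(t)\|$ for the corrupted tail and head.

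Next I would substitute these simplified distances into the definition of $\mathcal{L}_{kge}(h,r_s,t)$ from Eq.~\ref{eq:kge}. Assuming the margin condition is active for both negatives (i.e., the regime in which $\mathcal{L}_{kge}$ provides a nonzero gradient signal), the outer $\max(0,\cdot)$ operators can be dropped and the additive constant $\Delta$ absorbed into the proportionality symbol. The two summands then collect to
\begin{equation*}
\bigl(\|f(h)-f(t)\|-\|f(h)-f(\tilde t)\|\bigr)+\bigl(\|f(h)-f(t)\|-\|f(\tilde h)-f(t)\|\bigr),
\end{equation*}
which rearranges to the claimed $2\|f(h)-f(t)\|-\|f(h)-f(\tilde t)\|-\|f(\tilde h)-f(t)\|$.

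The main obstacle I anticipate is justifying the collapse $g(r_s)\approx 0$ cleanly: it is a well-known consequence of TransE's inability to faithfully represent symmetric relations (both directions force the translation to cancel), but this is a property of the trained optimum rather than of the raw definitions, so it must be invoked as a modelling identity rather than derived from first principles. A secondary, more cosmetic issue is the looseness hidden in $\propto$: it absorbs both $\Delta$ and the hinge clipping, and a fully rigorous version would either restrict attention to the active regime or state the identity up to an additive constant.
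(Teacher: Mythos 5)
Your overall skeleton matches the paper's: establish $g(r_s)=0$, then use the large-margin assumption to drop the hinge and collect the two summands into $2\|f(h)-f(t)\|-\|f(h)-f(\tilde t)\|-\|f(\tilde h)-f(t)\|$. The final algebraic step is exactly the paper's concluding computation, including the observation that $\Delta$ and the active hinge are absorbed into the proportionality. The gap is in how you justify $g(r_s)\approx 0$, which is the entire substance of the paper's proof. Your argument adds the two "identities" $f(h)+g(r_s)\approx f(t)$ and $f(t)+g(r_s)\approx f(h)$, but these identities are not delivered by the max-margin objective: the hinge loss only enforces a \emph{gap} between positive and negative distances, never that $d(h,r_s,t)\approx 0$. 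Indeed the paper's own appendix reports an average embedding distance of about $1.235$ between structurally similar molecules (versus $1.410$ for random pairs) with $\Delta=0.2$, so the premise of your cancellation is empirically and structurally false in this setting; invoking it as a "modelling identity," as you propose, assumes precisely what needs to be shown.

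What the paper does instead is a genuine optimization argument that avoids this premise. It writes the expected loss as a function of $g(r_s)$ (using the assumption that $\Delta$ keeps every hinge active, so the loss is smooth in $g(r_s)$), computes $\partial\mathcal{L}/\partial g(r_s)$, and uses symmetry to pair each triplet $(h,r_s,t)$ with its mirror $(t,r_s,h)$: at $g(r_s)=0$ the paired unit vectors $\frac{f(h)-f(t)}{d(h,r_s,t)}$ and $\frac{f(t)-f(h)}{d(t,r_s,h)}$ cancel exactly, so the gradient vanishes \emph{regardless of how large the positive distances are}. It then shows the Hessian is positive definite everywhere, using the isotropy assumption on the direction of $f(h)+g(r_s)-f(t)$, the sparsity bound $d^{out}_{h,r_s},d^{in}_{t,r_s}\le N/2$, and the assumed $1/d$ margin between positive and negative triplets, which makes $g(r_s)=0$ the global minimizer of the expected loss rather than merely a stationary point. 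To repair your proof you would need to replace the additive-cancellation heuristic with this first- and second-order optimality analysis (or some equivalent argument that symmetric pairing forces the minimizing translation to be zero); without it, the reduction of $d(\cdot,r_s,\cdot)$ to the pure Euclidean distance is unsupported.
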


\begin{lemma}
    \label{lem:function}
    Assuming that for functionally similar molecules $h$ and $t$, there exists some entity $o$ and relation $r$ that satisfies $(h,r,o)\in KG, (t, r, o)\in KG$ \textbf{or} $(o,r,h)\in KG, (o,r,t)\in KG$. We use $\mathcal{I}$ to denote the triplets between $h,t$ and these intermediate entities $o$. The following holds:
    \begin{equation}
        \|f(h)-f(t)\|\le \alpha \mathbb{E}_{(e_1,r,e_2)\sim \mathcal{I}}\left[\mathcal{L}_{kge}(e_1,r,e_2)\right] + C,
    \end{equation}
    where $\alpha\approx 1$ and $C\approx 0$ are constants.
\end{lemma}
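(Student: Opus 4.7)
The intuition is that if $h$ and $t$ are both connected to the same intermediate entity $o$ via the same relation $r$, the translational structure of TransE forces $f(h)+g(r)\approx f(o)\approx f(t)+g(r)$, hence $f(h)\approx f(t)$. My plan is to make this precise in two steps: first, turn $\|f(h)-f(t)\|$ into a statement about TransE residuals $d(\cdot,r,\cdot)$ via the triangle inequality, and second, bound each residual using the structure of the max-margin loss $\mathcal{L}_{kge}$.

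\textbf{Step 1 (triangle inequality).} In the head-sharing case $(h,r,o),(t,r,o)\in KG$, writing
\begin{equation*}
    f(h)-f(t) \;=\; \bigl[f(h)+g(r)-f(o)\bigr]-\bigl[f(t)+g(r)-f(o)\bigr]
\end{equation*}
and applying the triangle inequality gives $\|f(h)-f(t)\|\le d(h,r,o)+d(t,r,o)$. The tail-sharing case $(o,r,h),(o,r,t)\in KG$ is symmetric, yielding $\|f(h)-f(t)\|\le d(o,r,h)+d(o,r,t)$. When several intermediate entities contribute, I average these inequalities across them; since each intermediate $o$ supplies exactly two triplets to $\mathcal{I}$, the average equals $2\,\mathbb{E}_{(e_1,r,e_2)\sim\mathcal{I}}[d(e_1,r,e_2)]$.

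\textbf{Step 2 (margin loss to residual).} Using $\max(0,x)\ge x$ termwise on the definition of $\mathcal{L}_{kge}$ and rearranging, every positive triplet satisfies
\begin{equation*}
    d(e_1,r,e_2)\;\le\;\tfrac{1}{2}\mathcal{L}_{kge}(e_1,r,e_2)+\tfrac{1}{2}\bigl[d(e_1,r,\tilde{e}_2)+d(\tilde{e}_1,r,e_2)\bigr]-\Delta.
\end{equation*}
Taking $\mathbb{E}_\mathcal{I}$ of this bound and combining with Step~1, the factor $2$ from Step~1 exactly cancels the $\tfrac{1}{2}$ from Step~2, producing $\|f(h)-f(t)\|\le \mathbb{E}_\mathcal{I}[\mathcal{L}_{kge}]+C$ with $\alpha=1$ and $C=\mathbb{E}_\mathcal{I}\bigl[d(e_1,r,\tilde{e}_2)+d(\tilde{e}_1,r,e_2)\bigr]-2\Delta$.

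\textbf{Main obstacle.} The only non-algebraic step is arguing $C\approx 0$. This rests on the empirical regularity that after sufficient TransE training, randomly sampled negatives saturate the margin in expectation, i.e.\ $\mathbb{E}[d(e_1,r,\tilde{e}_2)]\approx d(e_1,r,e_2)+\Delta\approx \Delta$ whenever the positive residuals have been driven near zero; then the expected negative distances contribute roughly $2\Delta$ and cancel the $-2\Delta$. I would introduce this explicitly as a working assumption on the pre-trained knowledge graph encoder, since it is the only place where the argument departs from purely deterministic manipulations. Every other ingredient is a single application of the triangle inequality or of $\max(0,\cdot)\ge(\cdot)$.
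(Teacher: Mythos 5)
Your proposal is correct in outline and its Step 1 is exactly the paper's opening move: the paper also bounds $\|f(h)-f(t)\|$ through the triangle inequality by $d(h,r,o)+d(t,r,o)$ (resp.\ $d(o,r,h)+d(o,r,t)$) and then by the average $\frac{2}{|\mathcal{I}|}\sum_{(e_1,r,e_2)\in\mathcal{I}}d(e_1,r,e_2)$. Where you genuinely diverge is the handling of the negative-sampling terms. The paper invokes its Assumption A.1 (the margin $\Delta$ exceeds every pairwise distance difference, so the hinge is never clipped and $\max(0,\cdot)$ becomes an identity), then follows the rewriting of Qiu et al.: the expectation $\mathbb{E}_{\tilde{t}\sim\mathscr{E}\backslash t}[d(h,r,\tilde{t})]$ is expanded into degree-weighted sums over \emph{all} entity pairs, the degrees are bounded by $d_{e_1,r}^{out},d_{e_2,r}^{in}\le|\mathcal{I}|$, and every distance is bounded by $\eta+\Delta$ with $\eta=\min_{(e_1,r,e_2)\in KG}d(e_1,r,e_2)\approx 0$; this yields $\alpha=\frac{N-1}{N-|\mathcal{I}|}$ and $C=\frac{2(\Delta+N\eta)}{N-|\mathcal{I}|}$, so the conclusion needs $|\mathcal{I}|\ll N$ and $\eta\approx 0$. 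Your route is more elementary --- termwise $\max(0,x)\ge x$, exact $\alpha=1$, no degree bookkeeping --- but you pay for it with a new ``margin saturation'' assumption $\mathbb{E}[d(e_1,r,\tilde{e}_2)]\approx\Delta$ that the paper never makes. That extra assumption is in fact avoidable: the paper's Assumption A.1, stated before both lemmas, already gives the one-sided bound you need, since it applies to \emph{arbitrary} entity pairs and hence to corrupted triplets, giving $d(e_1,r,\tilde{e}_2)\le\eta+\Delta$ and therefore $C=\mathbb{E}_{\mathcal{I}}\left[d(e_1,r,\tilde{e}_2)+d(\tilde{e}_1,r,e_2)\right]-2\Delta\le 2\eta\approx 0$. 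With that substitution your argument stays entirely within the paper's hypotheses and is, if anything, tighter than the paper's ($\alpha=1$ exactly, no $|\mathcal{I}|\ll N$ condition). One caution on your heuristic justification: ``negatives saturate the margin'' is the fragile direction, because once the margin is met the hinge gradient vanishes and nothing pulls negatives back toward $\Delta$; indeed the paper's own appendix reports average distances around $1.2$--$1.4$ against $\Delta=0.2$, so you should anchor $C$ to Assumption A.1 rather than to empirical saturation.
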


Lemma \ref{lem:structure} shows that $\mathcal{L}_{kge}$ pulls close the entity embeddings of structurally similar molecules and pushes away dissimilar molecules. In Lemma \ref{lem:function} we hypothesize that functionally similar molecules tend to interact with the same entity (e.g. treats the same disease). Then, we show that the mean $\mathcal{L}_{kge}$ over $\mathcal{I}$ serves as an upper bound for the distance between functionally similar molecules. Hence, by combining CMM and KGE, we empower our multimodal encoder with local knowledge from molecular structures and texts, as well as global knowledge from knowledge graphs. 

\subsection{Pre-training dataset and knowledge graph\label{sec:pretrain_data}}
We follow the pre-training data in \cite{su2022molecular}, which consists of 15K molecules from PubChem \cite{kim2016pubchem} and 37M paragraphs from S2ORC \cite{lo2020s2orc}. We construct our knowledge graph using public databases \cite{wishart2018drugbank, gilson2016bindingdb,delmas2021building} and heuristics \cite{sun2021mocl}. The knowledge graph contains a total of 49K entities and 3.2M relations. We present more details in Appendix \ref{sec:app_data}.

\subsection{Implementation details\label{sec:implementation}}
The MolFM model comprises a molecular structure encoder with 1.8M parameters, a text encoder with 61.8M parameters, a knowledge encoder with 12.6M parameters, and a multi-modal encoder with 61.8M parameters. We pre-train MolFM for 300 epochs with a batch size of 128 on 4 NVIDIA A100 GPUs. We use the AdamW \cite{loshchilov2017decoupled} optimizer with a weight decay of $1e^{-4}$. The learning rate is linearly warmed-up to $1e^{-4}$ in the first 2,000 iterations and then decreases to $1e^{-5}$ following a cosine annealing strategy. We set $N=4, \tau=0.1$ and $\Delta=0.2$. %No additional augmentation is applied for molecule structure since corrupted molecules may bring negative effects\cite{}.   
\section{Downstream tasks \label{sec:downstream}}
\begin{figure*}[tpb]
\centering
    \includegraphics[width=0.9\linewidth]{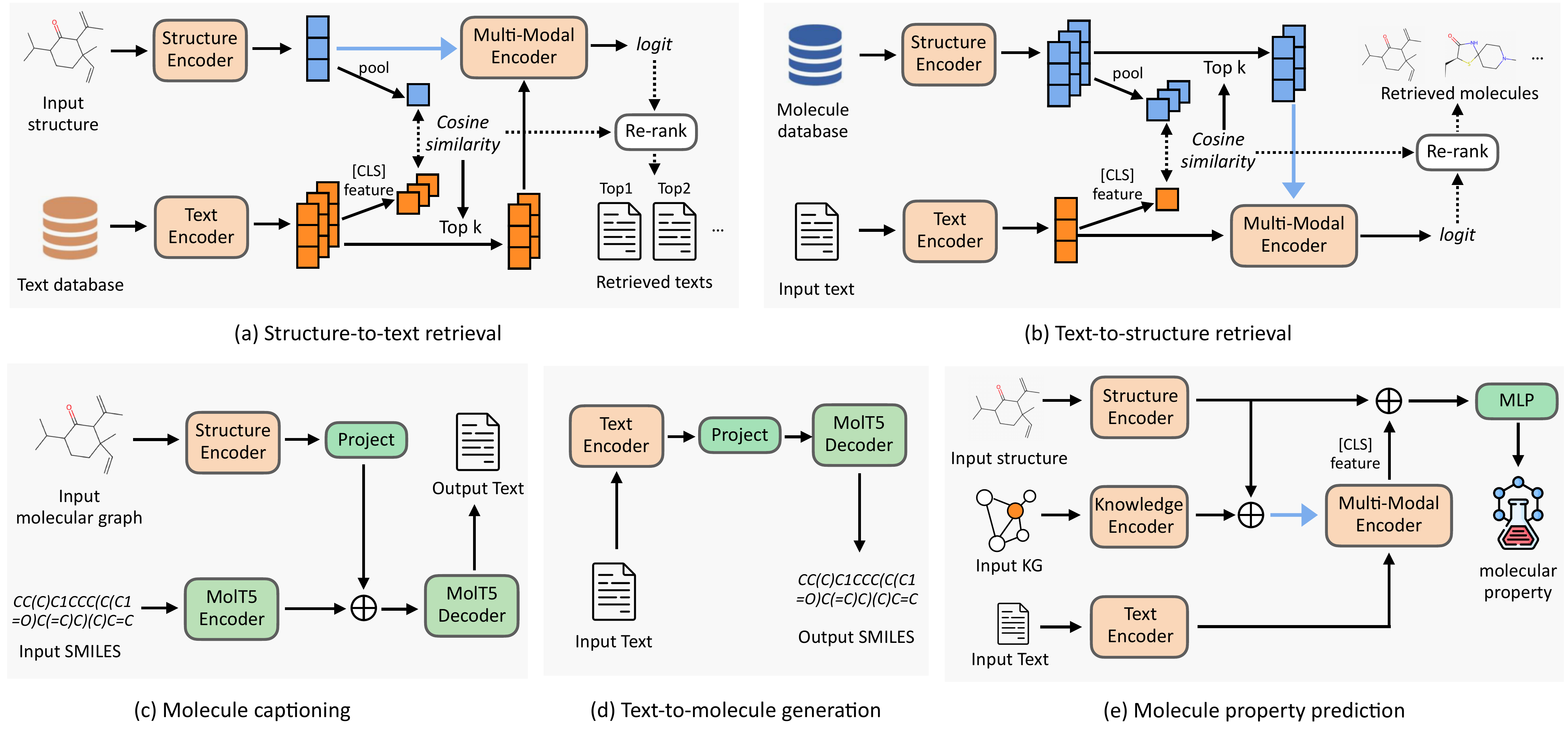}
\captionsetup{font={small,stretch=0.95}}
\caption{\textbf{Model architecture for downstream tasks.} For cross-modal retrieval, we re-rank top-k retrieved results with an ensemble of cosine similarity and CMM logit. For molecule captioning, we concatenate MolFM's structure encoder outputs with MolT5 encoder outputs, and use the MolT5 decoder to generate texts. For text-to-molecule generation, we append a MolT5 decoder to generate SMILES strings. For molecular property prediction, we concatenate the output of structure encoder and multimodal encoder to fit the molecular property.}
\label{fig:downstram_tasks}
\end{figure*}
In this section, we present 4 downstream tasks and their fine-tuning strategy.

\textbf{Cross-modal retrieval} contains two sub-tasks, namely structure-to-text retrieval (S-T) and text-to-structure retrieval (T-S). We evaluate MolFM on PCdes \cite{zeng2022deep} in both zero-shot and fine-tuning scenarios with the entire paragraph as text input. We report MRR (mean reversed rank) and Recall at 1/5/10. As depicted in Fig. \ref{fig:downstram_tasks}a and Fig. \ref{fig:downstram_tasks}b, we modify the re-ranking algorithm in \cite{li2021align} with an ensemble technique \cite{edwards2021text2mol}. Specifically, we simultaneously optimize the fine-tuning objective and CMM loss in Eq. \ref{eq:cmm} during fine-tuning. For inference, we first retrieve the top-$k$ candidates based on cosine similarity. Then, we calculate the CMM logits for these $k$ candidates. Finally, we re-rank them by a linear combination of cosine similarities and CMM logits.

\textbf{Molecule captioning} involves generating descriptions based on molecular structures. We conduct experiments on the ChEBI-20 dataset \cite{edwards2021text2mol} and follow the evaluation metrics in \cite{edwards2022translation}. As shown in Fig. \ref{fig:downstram_tasks}c, we apply a fully-connected layer to project the atom features $h_{SA}$ and concatenate the results with outputs from the MolT5 \cite{edwards2022translation} encoder. Then, we use the MolT5 decoder to generate the caption. 

\textbf{Text-based molecule generation} refers to the task of generating the SMILES strings of molecules using textual descriptions as input. Once again, we utilize the ChEBI-20 dataset and evaluation metrics in \cite{edwards2022translation}. As illustrated in Fig. \ref{fig:downstram_tasks}d, we pass the text features $h_T$ through a fully-connected layer and feed them into the MolT5 decoder to generate SMILES strings.

\textbf{Molecular property prediction} is a vital task in AI-assisted drug discovery. We adopt MoleculeNet \cite{wu2018moleculenet}, a widely recognized benchmark encompassing 8 classification datasets whose prediction objectives range from bio-activity to toxicity. We report ROC\_AUC for each dataset. The prediction pipeline is illustrated in Fig. \ref{fig:downstram_tasks}e. Inspired by DeepEIK \cite{luo2023empowering}, we first obtain knowledge and text data for molecules within the dataset through SMILES matching. Then, we feed the multimodal inputs into MolFM. We concatenate the structure feature $h^{SM}$ with the \textit{[CLS]} feature of the multimodal encoder. Finally, the multimodal feature is passed into a prediction head to fit the molecular property. 

\section{Experiments \label{sec:exp}}
In this section, we first conduct ablation studies to analyze the contributions of different components in MolFM (Sec. \ref{sec:ablation}). Then, we present the state-of-the-art performance of MolFM on cross-modal retrieval (Sec. \ref{sec:cross_modality}), molecule captioning (Sec. \ref{sec:molcap}), text-to-molecule generation (Sec. \ref{sec:t2m}) and molecular property prediction (Sec. \ref{sec:molprop}). Furthermore, we showcase the implicit ability of our model to provide groundings through visualization of cross-modal attention (Sec. \ref{sec:case}). 

\begin{wraptable}{r}{0.47\linewidth}
\small
\centering
\captionsetup{font={small,stretch=0.95}}
\caption{Influence of MolFM components for zero-shot cross-modal retrieval. We report the average of R@1, R@5 and R@10. w/o knowledge: the knowledge graph input is removed. CMM: cross-modal matching. KGE: knowledge graph embedding.}
\label{tab:ablation_pretrain}
\begin{tabular}{lcc}
\toprule
Method                 & S-T   & T-S   \\
\midrule
MolFM                   & \textbf{26.27} & \textbf{28.78} \\
\midrule
- w/o re-rank            & 25.22 & 28.13 \\
- w/o attention to atoms & 23.45 & 25.89 \\
- w/o attention to neighbors & 25.23 & 28.49 \\
- w/o knowledge          & 24.66 & 27.33 \\
\midrule
- w/o KGE                & 25.81 & 28.24 \\
- w/o CMM                & 23.48 & 25.96 \\
- w/o knowledge+CMM      & 22.07 & 24.48 \\
\bottomrule
\end{tabular}
\end{wraptable}
\subsection{Ablation studies\label{sec:ablation}}
To demonstrate the effectiveness of each component in MolFM, we compare performance on zero-shot cross-modal retrieval with different variants of our method in Tab. \ref{tab:ablation_pretrain}. We find that the application of re-ranking improves the retrieval performance. Surprisingly, the performance drops sharply when cross-modal attention to atoms or CMM is removed. These results highlight the significance of learning intricate connections between substructures and word snippets with a multi-modal encoder through appropriate pre-training tasks. Besides, incorporating knowledge graphs yields an average improvement of 1.5\% for the same pre-training tasks, which demonstrates the effectiveness of global molecular knowledge. Furthermore, both attention to neighbors and KGE contributes slightly to MolFM's capability to leverage knowledge graphs.  
\subsection{Evaluation on cross-modal retrieval\label{sec:cross_modality}}
\begin{table}[tpb]\small
\caption{Paragraph-level cross-modal retrieval results on the test split of PCdes.}
\label{tab:retrieval}
\setlength\tabcolsep{4.5pt}
\centering
\begin{tabular}{llcccccccc}
\toprule
\multirow{2}{*}{Mode} & \multirow{2}{*}{Model} & \multicolumn{4}{c}{S-T}                 & \multicolumn{4}{c}{T-S}                 \\
                     &  & MRR   & R@1 & R@5 & R@10 & MRR   & R@1 & R@5 & R@10 \\ \midrule
\multirow{2}{*}{zero-shot} & MoMu \cite{su2022molecular}                   & 9.89 & 5.08    & 12.82  & 18.93     & 10.33 & 4.90    & 14.48  & 20.69     \\
%& MolFM (w/o rerank)               & 20.74     & 13.81  & 26.60  & 34.32     & 23.03 & 15.24    & 30.13    & 38.86     \\
& MolFM              & \textbf{21.42} & \textbf{13.90}    & \textbf{28.69}    & \textbf{36.21}     & \textbf{23.63} & \textbf{16.14}    & \textbf{30.67}    & \textbf{39.54}     \\
\midrule
\multirow{6}{*}{fine-tune} & SciBERT \cite{beltagy2019scibert}  & 24.98 & 16.32 & 33.91 & 42.64 & 23.92 & 14.97 & 34.05 & 41.74 \\
& KV-PLM \cite{zeng2022deep} & 27.41 & 18.35 & 37.15 & 45.43 & 25.97 & 16.55 &  35.85 & 44.75  \\
& KV-PLM* \cite{zeng2022deep} & 29.15 & 20.60 & 37.87 & 45.74 & 28.12 & 19.29 & 37.33 & 45.29 \\
& GraphMVP \cite{graphmvp} & 31.57 & 23.26 & 40.21 & 47.39 & 30.93 & 21.94 & 40.28 & 47.90 \\
& MoMu \cite{su2022molecular}  & 34.29 & 24.47   & 45.38  & 53.84    & 34.53 & 24.87    & 44.93 & 54.25       \\
%& MolFM (w/o rerank)  & 38.91 & 29.35   & 49.22  & 56.87    & 38.65 & 28.94    & 48.86 & 57.18       \\
& MolFM     & \textbf{39.56} & \textbf{29.76}    & \textbf{50.53}    & \textbf{58.63}     & \textbf{39.34} & \textbf{29.39}    & \textbf{50.26}    & \textbf{58.49} \\ 
\bottomrule
\end{tabular}
\end{table}

Tab. \ref{tab:retrieval} shows the overall cross-modal retrieval performance. Detailed results and analysis could be found in Appendix \ref{sec:app_exp} and Appendix \ref{sec:app_case_mtr}. In the zero-shot setting, MolFM achieves a notable increase of 11.08\% and 13.19\% in MRR over the state-of-the-art method MoMu on S-T and T-S retrieval. In the fine-tuning setting, MolFM continues to deliver significant improvements. Given the limited scale and substantial noise of our pre-training dataset, we conclude that MolFM exhibits strong generalization capabilities in cross-modal retrieval tasks.

\subsection{Evaluation on molecule captioning\label{sec:molcap}}
Tab. \ref{tab:molcap} reports the results of molecule captioning, where MolFM consistently achieves state-of-the-art performance. Compared to MolT5 and MoMu, MolFM shows significant advancements in BLEU \cite{papineni2002bleu} and Text2Mol \cite{edwards2021text2mol} measures, indicating that it generates smoother and more semantically related descriptions. In comparison to GraphMVP, MolFM also exhibits modest improvements, demonstrating that our multimodal pre-training further brings benefits to our structure encoder. Additionally, we provide molecule captioning examples in Fig. \ref{fig:molcap} and Appendix \ref{sec:app_case_mp}. It is evident that MolFM shows better understanding of complex functional groups such as oligosaccharides and molecular properties such as inhibitory effects. 
\begin{table}[tpb]\small
\centering
\caption{Molecule captioning results on the test split of ChEBI-20}
\label{tab:molcap}
\setlength\tabcolsep{2pt}
\begin{tabular}{llccccccc}
\toprule
Decoder     & Encoder     & BLEU-2 & BLEU-4 & ROUGE-1 & ROUGE-2 & ROUGE-L & METEOR & Text2Mol \\
\midrule
\multirow{4}{*}{\makecell[l]{MolT5\\-small}} & MolT5-small \cite{edwards2022translation}  & 0.519  & 0.436  & 0.620   & \textbf{0.469}   & 0.563   & 0.551  & 0.540    \\
& MoMu \cite{su2022molecular} & 0.532  & 0.445  & 0.621   & \textbf{0.469}   & \textbf{0.564}   & 0.557  & 0.543    \\
& GraphMVP \cite{graphmvp} & 0.540  & 0.449  & 0.619   & 0.465   & 0.560   & 0.562  & 0.553    \\
& MolFM       & \textbf{0.542}  & \textbf{0.452}  & \textbf{0.623}   & \textbf{0.469}   & 0.562   & \textbf{0.564}  & \textbf{0.557}    \\
\midrule
\multirow{4}{*}{\makecell[l]{MolT5\\-base}} & MolT5-base \cite{edwards2022translation}   & 0.540  & 0.457  & 0.634   & 0.485   & 0.578   & 0.569  & 0.547    \\
& MoMu \cite{su2022molecular}  & 0.549  & 0.462  & 0.630       & 0.479       & 0.575   & 0.576  & 0.558    \\
& GraphMVP \cite{graphmvp}   & 0.577  & 0.491  & 0.651       & 0.505       & 0.592   & 0.599  & 0.570        \\
& MolFM    & \textbf{0.585}  & \textbf{0.498}  & \textbf{0.653}   & \textbf{0.508}   & \textbf{0.594}   & \textbf{0.607}  & \textbf{0.576}   \\
\bottomrule
\end{tabular}
\end{table}
\begin{figure*}[tpb]
\centering
    \includegraphics[width=\linewidth]{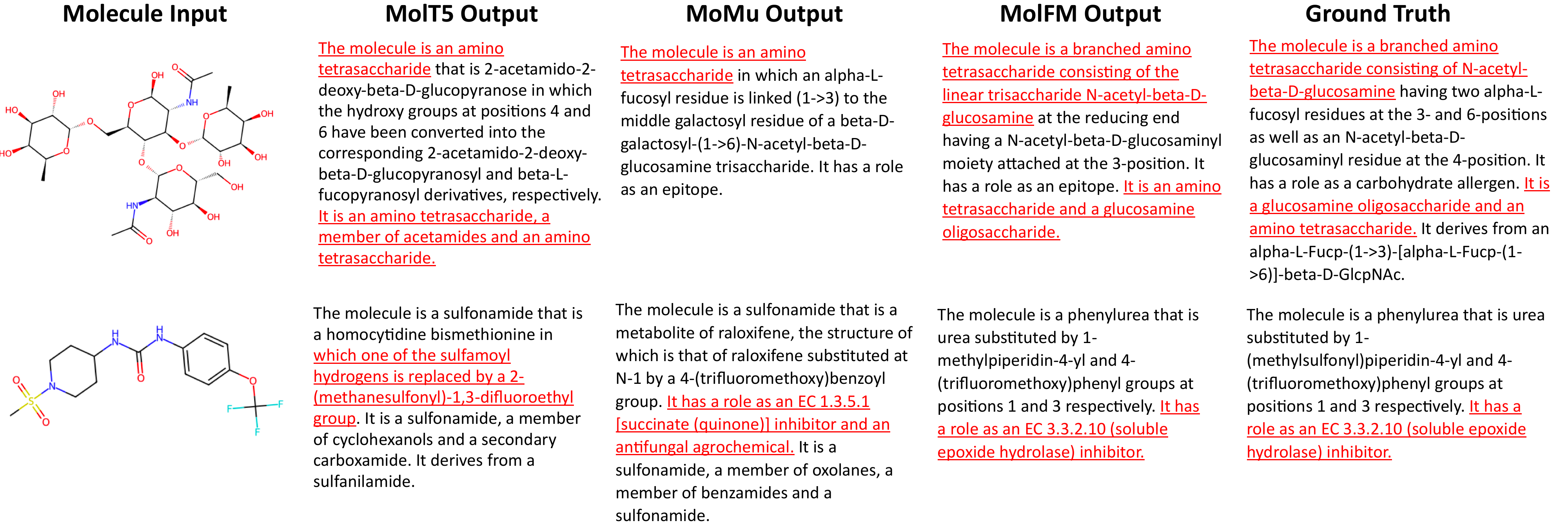}
\captionsetup{font={small,stretch=1.}}
\caption{Molecule captioning examples. We highlight the text segments where MolFM generates more accurate expressions.}
\label{fig:molcap}
\end{figure*}
\subsection{Evaluation on text-to-molecule generation\label{sec:t2m}}
Tab. \ref{tab:t2mgen} shows the results on text-to-molecule generation. MolFM outperforms prior models by generating molecules with considerably higher exact ratio and fingerprint Tanimoto similarity. Qualitative results in Fig. \ref{fig:t2m} also demonstrate that MolFM is able to capture subtle differences between similar sub-structures. Further cases and analysis can be found in Appendix \ref{sec:app_case_t2m}.
\begin{table}[tpb]\footnotesize
\setlength\tabcolsep{3pt}
\centering
\captionsetup{font={small,stretch=0.9}}
\caption{Text-based molecule generation results on the test split of ChEBI-20. $\uparrow$: The higher the better. $\downarrow$: The lower the better.}
\label{tab:t2mgen}
\begin{tabular}{llcccccccc}
\toprule
{\footnotesize Decoder}     & {\footnotesize Encoder}                      & {\footnotesize BLEU $\uparrow$}  & {\footnotesize Exact $\uparrow$} & {\footnotesize Valid $\uparrow$} & {\footnotesize Levenshtein $\downarrow$} & {\footnotesize \makecell[l]{MACCS\\ FTS $\uparrow$}} & {\footnotesize \makecell[l]{RDKit\\ FTS $\uparrow$}} & {\footnotesize \makecell[l]{Morgan\\ FTS $\uparrow$}} & {\footnotesize Text2Mol $\uparrow$} \\
\midrule
\multirow{4}{*}{\makecell[l]{MolT5\\-small}} & MolT5-small \cite{edwards2022translation} & 0.749 & 0.081 & 0.724 & 29.160      & 0.780     & 0.653   & 0.601      & 0.533    \\
& SciBERT \cite{beltagy2019scibert}   & 0.797 & 0.142 & 0.846 & 22.027      & 0.818     & 0.695   & 0.639      & 0.561    \\
& MoMu \cite{su2022molecular}    & 0.800 & 0.150 & 0.858 & 21.446      & 0.818     & 0.709   & 0.651      & 0.566    \\
& MolFM   & \textbf{0.803} & \textbf{0.169} & \textbf{0.859} & \textbf{20.868}      & \textbf{0.834}     & \textbf{0.721}   & \textbf{0.662}      & \textbf{0.573}    \\
\midrule
\multirow{4}{*}{\makecell[l]{MolT5\\-base}} & MolT5-base \cite{edwards2022translation}  & 0.779 & 0.082 & 0.786      & 25.188  & 0.787 & 0.661  & 0.601     & 0.543        \\
& SciBERT \cite{beltagy2019scibert}  & 0.812 & 0.179 & 0.852     & 21.192          & 0.844         & 0.733       & 0.678          & 0.575        \\
& MoMu \cite{su2022molecular}  & 0.815 & 0.183 & 0.863     & 20.520           & 0.847         & 0.737       & 0.678          & 0.580        \\
& MolFM    & \textbf{0.822} & \textbf{0.210} & \textbf{0.892} & \textbf{19.445}      & \textbf{0.854}     & \textbf{0.758}   & \textbf{0.697}      & \textbf{0.583}    \\
\bottomrule
\end{tabular}
\end{table}
\begin{figure*}[tpb]
\centering
    \includegraphics[width=\linewidth]{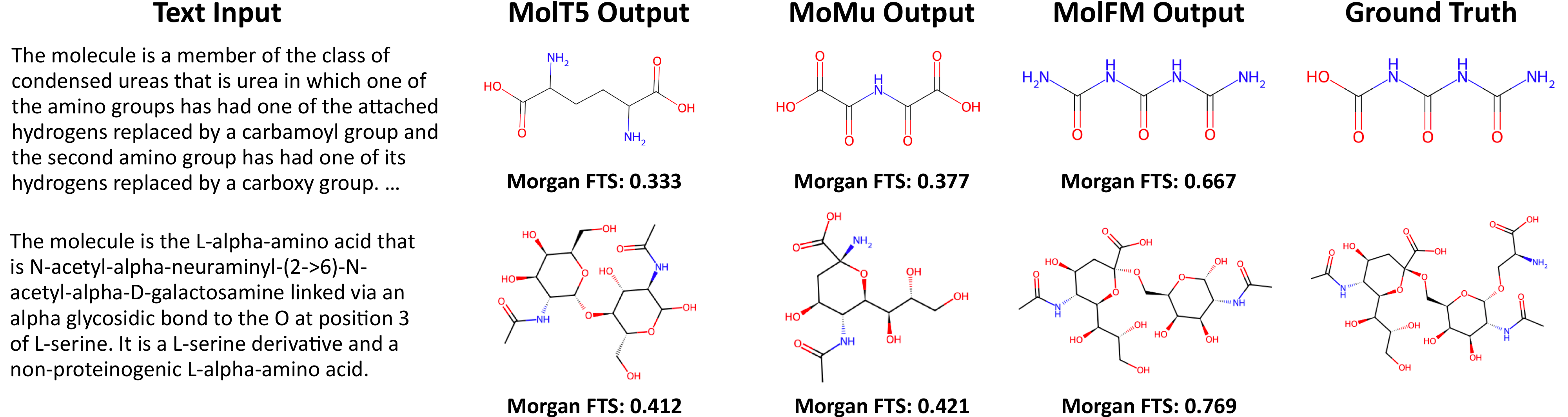}
\captionsetup{font={small,stretch=0.9}}
\caption{Examples of text-to-molecule generation examples, along with the Morgan fingerprint Tanimoto similarity between the generated molecules and the ground truth.}
\label{fig:t2m}
\end{figure*}
\subsection{Evaluation on molecular property prediction\label{sec:molprop}}
Tab. \ref{fig:mp} reports the performance comparison on molecular property prediction. By incorporating additional knowledge graphs and texts, MolFM achieves state-of-the-art performance across 6 out of 8 datasets, demonstrating an average absolute gain of 1.55\% over GraphMVP. When considering inputs from a single modality, namely molecular structure, MolFM shows improved results on Tox21, ToxCast, MUV, HIV and BACE, no statistically significant difference on BBBP and ClinTox, and a slight performance decrease on SIDER compared to GraphMVP. These results highlight the effectiveness of our pre-training, especially when leveraging multimodal information.

\begin{table}[htpb]\small
\setlength\tabcolsep{2.5pt}
\captionsetup{font={small,stretch=1.}}
\caption{Molecular property prediction results on MoleculeNet. w/o T+K: without the additional inputs from texts and knowledge graphs. w/ T+K: with the additional inputs from texts and knowledge graphs.}
\label{fig:mp}
\begin{tabular}{lccccccccc}
\toprule
Model  & BBBP & Tox21 & ToxCast & SIDER & ClinTox & MUV  & HIV  & BACE & Avg  \\
\midrule
GIN             & 65.4$_{\pm 2.4}$ & 74.9$_{\pm 0.8}$  & 61.6$_{\pm 1.2}$    & 58.0$_{\pm 2.4}$  & 58.8$_{\pm 5.5}$ & 71.0$_{\pm 2.5}$ & 75.3$_{\pm 0.5}$ & 72.6$_{\pm 4.9}$ & 67.21 \\
\midrule
AttrMask \cite{hustrategies}      & 70.2$_{\pm 0.5}$ & 74.2$_{\pm 0.8}$  & 62.5$_{\pm 0.4}$    & 60.4$_{\pm 0.6}$  & 68.6$_{\pm 9.6}$    & 73.9$_{\pm 1.3}$ & 74.3$_{\pm 0.6}$ & 77.2$_{\pm 1.4}$ & 70.16 \\
ContextPred \cite{hustrategies}  & 71.2$_{\pm 0.9}$ & 73.3$_{\pm 0.5}$  & 62.8$_{\pm 0.3}$    & 59.3$_{\pm 1.4}$  & 73.7$_{\pm 4.0}$    & 72.5$_{\pm 2.2}$ & 75.8$_{\pm 1.1}$ & 78.6$_{\pm 1.4}$ & 70.89 \\
GraphCL \cite{you2020graph}      & 67.5$_{\pm 3.3}$ & 75.0$_{\pm 0.3}$  & 62.8$_{\pm 0.2}$    & 60.1$_{\pm 1.3}$  & 78.9$_{\pm 4.2}$    & \textbf{77.1$_{\pm 1.0}$} & 75.0$_{\pm 0.4}$ & 68.7$_{\pm 7.8}$ & 70.64 \\
GraphMVP \cite{graphmvp}     & 72.4$_{\pm 1.6}$ & 74.4$_{\pm 0.2}$  & 63.1$_{\pm 0.4}$    & 63.9$_{\pm 1.2}$  & 77.5$_{\pm 4.2}$    & 75.0$_{\pm 1.0}$ & 77.0$_{\pm 1.2}$ & 81.2$_{\pm 0.9}$ & 73.07 \\
\midrule
%DeepEIK       &      &       &         &       &         &      &      &      &      \\
KV-PLM \cite{zeng2022deep}       & 66.9$_{\pm 1.1}$ & 64.7$_{\pm 1.8}$ & 58.6$_{\pm 0.4}$ & 55.3$_{\pm 0.8}$ & 84.3$_{\pm 1.5}$ & 60.2$_{\pm 2.9}$ & 68.8$_{\pm 4.9}$ & 71.9$_{\pm 2.1}$ & 66.29     \\
DeepEIK \cite{luo2023empowering} & 72.1$_{\pm 0.4}$ & 72.4$_{\pm0.9}$ & 61.5$_{\pm 0.4}$ & 63.5$_{\pm 0.9}$ & \textbf{89.7$_{\pm 1.8}$} & 71.4$_{\pm 1.0}$ & 75.0$_{\pm 0.6}$ & 80.5$_{\pm 1.2}$ & 73.27 \\
MoMu \cite{su2022molecular}         & 70.5$_{\pm 2.0}$ & 75.6$_{\pm 0.3}$  & 63.4$_{\pm 0.5}$    & 60.5$_{\pm 0.9}$  & 79.9$_{\pm 4.1}$  & 70.5$_{\pm 1.4}$ & 75.9$_{\pm 0.8}$ & 76.7$_{\pm 2.1}$ & 71.63 \\
MolFM (w/o T+K) & 72.2$_{\pm 0.1}$ & 76.6$_{\pm 0.4}$ & 64.2$_{\pm 0.1}$ & 63.2$_{\pm 0.3}$ & 78.6$_{\pm 1.3}$ & 76.0$_{\pm 0.8}$ & 78.2$_{\pm 0.4}$ & 82.6$_{\pm 0.6}$  & 73.95     \\
MolFM (w/ T+K)  & \textbf{72.9$_{\pm 0.1}$} & \textbf{77.2$_{\pm 0.7}$}  & \textbf{64.4$_{\pm 0.2}$}       & \textbf{64.2$_{\pm 0.9}$}   & 79.7$_{\pm 1.6}$  & 76.0$_{\pm 0.8}$       & \textbf{78.8$_{\pm 1.1}$}   & \textbf{83.9$_{\pm 1.1}$}  & \textbf{74.62}          \\
\bottomrule
\end{tabular}
\end{table}
\subsection{Visualization of cross-modal attention\label{sec:case}}
\begin{figure*} [tpb]
\centering
    \includegraphics[width=\linewidth]{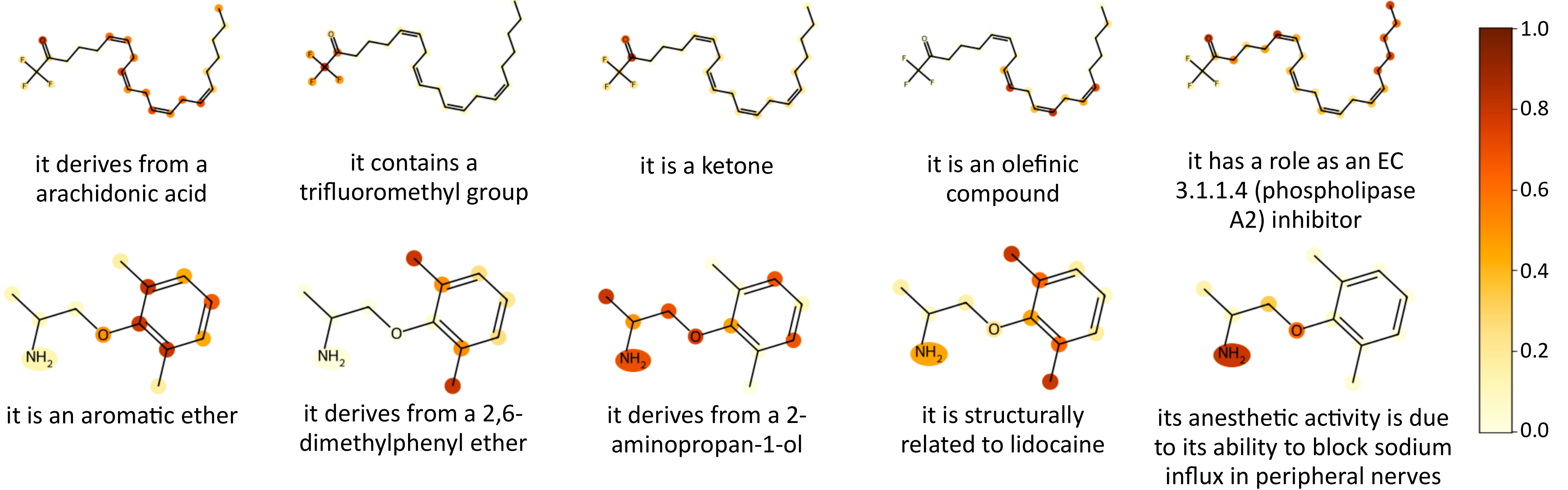}
\captionsetup{font={small,stretch=1.}}
\caption{Visualization of atom attention with different input texts.}
\label{fig:attn_atom}
\end{figure*}
\begin{figure*} [tpb]
\centering
    \includegraphics[width=\linewidth]{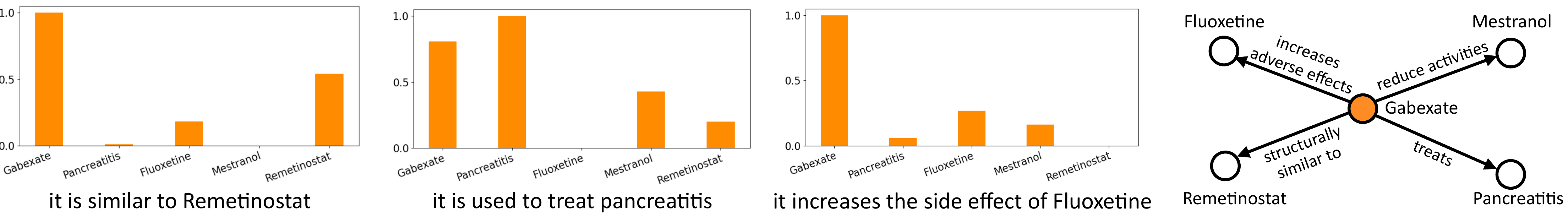}
\captionsetup{font={small,stretch=1.}}
\caption{Visualization of neighbor attention. \textbf{Left}: the input text and the normalized attention value to different entities. \textbf{Right}: the selected molecule (orange) and the relationships with its one-hop neighbors.}
\label{fig:attn_neigh}
\end{figure*}
We provide visualizations of our cross-modal attention between atoms, neighbors and texts in Fig. \ref{fig:attn_atom} , Fig. \ref{fig:attn_neigh} and Appendix \ref{sec:app_vis_attn}. We randomly select molecules and input phrases describing their sub-structures or properties, and display the attention maps of \textit{[CLS]} in the last cross attention layer of the multimodal encoder with a min-max normalization. Notably, the highlighted atoms in Fig. \ref{fig:attn_atom} form substructures that are strongly correlated to the text semantics. The multimodal attention in Fig. \ref{fig:attn_neigh} also captures relevant entities based on textual descriptions. These results reveal the potential of MolFM to establish meaningful associations between structures, texts and knowledge graphs. 
\section{Limitations and broader impacts}
While our work presents promising results in multi-modal molecular modeling, there are still areas for improvement and future exploration: (1) Our pre-training dataset may introduce biases or harmful information to MolFM due to its scale and quality. (2) MolFM may bring limited benefits to newly emerged molecules that lack available text and knowledge information. (3) While MolFM primarily focuses on molecules, incorporating other entities such as proteins, genes, and cell lines may lead to an even more comprehensive understanding of the biomedical context.

MolFM presents significant benefits for accelerating pharmaceutical research by connecting molecular structure with natural language and expert knowledge. However, there is a concern that MolFM may be misused to generate potentially dangerous or toxic molecules. Therefore, it is essential to ensure the responsible and ethical use of the model. We emphasize that MolFM should be employed solely for research purposes, and any further medical applications of MolFM should proceed with caution and undergo comprehensive experimental evaluations.
\section{Conclusion}
In this paper, we present MolFM, a multimodal molecular foundation model to facilitate joint representation learning with molecular structures, biomedical texts and knowledge graphs through leveraging fine-grained cross attention between three modalities. We demonstrate the effectiveness of our pre-training paradigm by both theoretical analysis and experimental evaluation. MolFM achieves state-of-the-art performance on various downstream tasks, with exceptional improvements in cross-modal retrieval. Under thorough analysis aimed at safety, MolFM has the potential to deliver unprecedented benefits to the biomedical research community.

\begin{ack}
This work is supported by the National Key R\&D Program of China (No. 2022YFF1203002).
\end{ack}

\bibliography{ref}

\newpage
\setcounter{section}{0}
\setcounter{equation}{0}
\setcounter{subsection}{0}
\setcounter{table}{0}
\setcounter{figure}{0}
\setcounter{lemma}{0}
\renewcommand{\theequation}{A.\arabic{equation}}
\renewcommand{\thefigure}{A.\arabic{figure}}
\renewcommand{\thetable}{A.\arabic{table}}
\renewcommand{\thelemma}{A.\arabic{lemma}}
\renewcommand{\thedefinition}{A.\arabic{definition}}
\renewcommand{\thesubsection}{\Alph{subsection}}
\section*{Appendix}

\subsection{Theoretical justifications for MolFM pre-training}
In this section, we establish a connection between our pre-training objectives and deep metric learning. We first show that MolFM aligns the feature space for different modalities of the same molecule by analyzing structure-text contrastive (Sec. \ref{sec:stc}), masked language model (Sec. \ref{sec:mlm}) and cross-modal matching (Sec. \ref{sec:cmm}). Then, we give detailed proofs for two lemmas presented in the main document, demonstrating that MolFM grasps global molecular expertise including structural and functional similarity (Sec. \ref{sec:kge}). 

\subsubsection{Analysis of structure-text contrastive (STC) loss \label{sec:stc}}
Given a set of triplets $(x, y, z)$ where $x$ is the anchor sample, $y$ is the positive sample that shares semantic correlations with $x$, and $z$ is the negative sample, deep metric learning \citeApp{_hoffer2015deep, _kaya2019deep} aims to learn a representation network $\mathcal{F}_{\Theta}(\cdot)$ and a distance metric function $\mathcal{D}_{\beta}(\cdot,\cdot)$ that minimizes the distance between $\mathcal{F}_{\Theta}(x)$ and $\mathcal{F}_{\Theta}(y)$ and maximizes the distance between $\mathcal{F}_{\Theta}(x)$ and $\mathcal{F}_{\Theta}(z)$:
\begin{equation}
    \argmin_{\Theta,\beta} \mathbb{E}_{(x,y,z)}\left[\mathcal{D}_{\beta}(\mathcal{F}_\Theta(x), \mathcal{F}_\Theta(y)) - \mathcal{D}_{\beta}(\mathcal{F}_\Theta(x), \mathcal{F}_\Theta(z))\right],
\end{equation}
where $\Theta$ and $\beta$ are model parameters. 

It has been well studied that optimizing the InfoNCE loss is equivalent to maximizing a lower bound of the mutual information between two different views of a data point \citeApp{_oord2018representation, _li2021align}:
\begin{equation}
\label{eq:infonce}
    I(A;B)\ge -\mathcal{L}_{NCE}=\mathbb{E}_{(a,b)}\left[\log\frac{\exp(s(a,b))}{\sum_{\tilde{b}\in \tilde{B}}\exp(s(a,\tilde{b}))}\right],
\end{equation}
where $A, B$ are random variables for the embeddings of different views, $a, b$ are positive samples, $I(\cdot;\cdot)$ denotes mutual information, $s(\cdot,\cdot)$ is a scoring function (we use cosine similarity in this study), and $\tilde{B}$ is a proposal distribution that contains $b$ and $|\tilde{B}|-1$ data points. Following \citeApp{_sun2021mocl}, we connect InfoNCE loss in Eq. \ref{eq:infonce} with deep metric learning by approximating $\log(1+x)$ as $x$ and first-order Taylor expansion:

\begin{equation}
\label{eq:infonce_dml}
    \begin{aligned}
        -\mathbb{E}_{(a,b)}\left[\log \frac{\exp(s(a,b))}{\sum_{\tilde{b}\in \tilde{B}}\exp(s(a,\tilde{b}))}\right] &= \mathbb{E}_{(a,b)}\left[\log\left(1+\sum_{\tilde{b}\in \tilde{B}, \tilde{b}\neq b}\exp(s(a,\tilde{b})-s(a,b))\right) \right]\\
        &\approx \mathbb{E}_{(a,b)}\left[\sum_{\tilde{b}\in \tilde{B}, \tilde{b}\neq b} \exp(s(a,\tilde{b})-s(a, b))\right]\\
        &\propto -\mathbb{E}_{(a,b)}\left[\sum_{\tilde{b}\in \tilde{B}} \left[s(a, b)-s(a, \tilde{b})\right]\right].
    \end{aligned}
\end{equation}
By conceptualizing $-s(\cdot, \cdot)$ as the metric function, the equation above establishes the connection between contrastive learning and deep metric learning.

Hence, assuming the projection head is an identical mapping, our structure-text contrastive (STC) aligns structural and textual representations for the same molecule in the following:

\begin{equation}
\begin{aligned}
    \mathcal{L}_{stc}&=-\frac{1}{2}\mathbb{E}_{(S,T,K)}\left[\log\frac{\exp(s(z_S,z_T)/\tau)}{\sum_{S'\in B}\exp(s(z_{S'},z_T)/\tau)}+\log\frac{\exp(s(z_S, z_T)/\tau)}{\sum_{T'\in B}\exp(s(z_S,z_{T'})/\tau))}\right]\\&\propto -\frac{1}{2\tau}\mathbb{E}_{(S,T,K)}\left[\sum_{S'\in B}\left[s(h_S,h_T) - s(h_{S'}, h_T)\right] + \sum_{T'\in B}\left[s(h_S,h_T) - s(h_S, h_{T'})\right]\right],
\end{aligned}
\end{equation}

where $B$ consists of molecular structures and texts within the same mini-batch, and $\tau$ is a temperature hyper-parameter.

\subsubsection{Analysis of masked language model (MLM)\label{sec:mlm}}

Following \citeApp{_li2021align}, we rewrite masked language modeling loss based on Eq. \ref{eq:infonce_dml} in the following:

\begin{equation}
\begin{aligned}
    \mathcal{L}_{mlm}&=\mathbb{E}_{(S,\hat{T},K)} \left[H(y_{mlm}(\hat{T}), p_{mlm}(\mathcal{M}_{\theta}(h_S, h_{\hat{T}}, h_K))) \right] \\
    &=-\mathbb{E}_{(S,\hat{T},K)} \left[ \log\frac{\exp[s(y_{mlm}(\hat{T}), p_{mlm}(\mathcal{M}_{\theta}(h_S,h_{\hat{T}}, h_K)))]}{\sum_{y\in \mathcal{V}}\exp[s(\psi(y),p_{mlm}(\mathcal{M}_{\theta}(h_S,h_{\hat{T}}, h_K)))]} \right] \\
    &\propto -\mathbb{E}_{(S,\hat{T},K)} \left[ \sum_{y\in \mathcal{V}}[s(y_{mlm}(\hat{T}), p_{mlm}(\mathcal{M}_{\theta}(h_S,h_{\hat{T}}, h_K))) \right.\\& \qquad \qquad \qquad \quad \left. -s(\psi(y), p_{mlm}(\mathcal{M}_{\theta}(h_S,h_{\hat{T}}, h_K)))] \vphantom{\sum_{y\in \mathcal{V}}} \right],
\end{aligned}
\end{equation}
where $\hat{T}$ is the masked token sequence, $y_{mlm}(\hat{T})$ is the one-hot ground truth of the masked token, $\mathcal{M}_{\theta}$ is the multi-modal encoder, $p_{mlm}$ is a predictor that calculates the probability distribution for masked tokens, and $\psi(\cdot):\mathcal{V}\rightarrow \mathbb{R}^{|\mathcal{V}|}$ is a function that maps tokens in the vocabulary set $\mathcal{V}$ to one-hot encodings. Hence, MLM pulls close representations between masked tokens with their multi-modal context.

\subsubsection{Analysis of cross-modal matching (CMM) \label{sec:cmm}}

Based on Eq. \ref{eq:infonce_dml}, the cross-modal matching loss is equivalent to the following:
\begin{equation}
\begin{aligned}
\mathcal{L}_{cmm}&=\mathbb{E}_{(S,T,K)} \left[ \sum_{(\tilde{S}, \tilde{T}, \tilde{K})\in \tilde{B}}H(y_{cmm}(\tilde{S},\tilde{T},\tilde{K}), p_{cmm}(\mathcal{M}_{\theta}(h_{\tilde{S}}, h_{\tilde{T}}, h_{\tilde{K}}))) \right]\\&=-\mathbb{E}_{(S,T,K)} \left[ \log \frac{\exp[p_{cmm}(\mathcal{M}_{\theta}(h_{S}, h_{T}, h_{K}))]}{\sum_{(\tilde{S}, \tilde{T}, \tilde{K})\in \tilde{B}}\exp[(p_{cmm}(\mathcal{M}_{\theta}(h_{\tilde{S}}, h_{\tilde{T}}, h_{\tilde{K}}))]} \right] \\&\propto -\sum_{(\tilde{S},\tilde{T},\tilde{K})\in \tilde{B}}\left[p_{cmm}(\mathcal{M}_\theta(h_S,h_T,h_K)) - p_{cmm}(\mathcal{M}_\theta(h_{\tilde{S}}, h_{\tilde{T}}, h_{\tilde{K}}))\right],
\end{aligned}
\end{equation}
where $\tilde{B}$ is the corrupted mini-batch with $y_{cmm}(\tilde{S}, \tilde{T}, \tilde{K})$ indicating whether the multimodal data from the mini-batch correspond to the same molecule, and $p_{cmm}$ is a binary predictor. By conceptualizing $-p_{cmm}(\mathcal{M}_{\theta}(\cdot,\cdot,\cdot))$ as a distance function, we demonstrate that CMM aligns the structural, texutal and knowledge graph representations of the same molecule.

\subsubsection{Analysis of knowledge graph embedding (KGE) \label{sec:kge}}
In this sub-section, we start with several definitions and denotations with respect to the knowledge graph embedding algorithm. Then we prove the two lemmas in the main document, showing that KGE pulls close embeddings for molecules that shares similar structures (Lemma. \ref{lem:app_structure}) or similar functions (Lemma. \ref{lem:app_function}).

\begin{definition}
Knowledge Graph Embedding. We define $KG=\{(h,r,t)|h,t\in \mathscr{E}, r\in \mathscr{R}\}$ where $\mathscr{E}$ is the entity set and $\mathscr{R}$ is the relation set. We define $N=|\mathscr{E}|$ (the number of entities) and $M=|KG|$ (the number of relations), and use $x\sim X$ to denote that $x$ is uniformly sampled from $X$. KGE aims to learn an entity embedding function $f:\mathscr{E}\rightarrow \mathbb{R}^n$ and a relation embedding function $g:\mathscr{R}\rightarrow \mathbb{R}^n$ by optimizing the following max-of-margin loss for each triplet $(h,r,t)\in KG$:
\begin{equation}
\label{equ:kge}
\begin{aligned}
    \mathcal{L}_{kge}(h,r,t)=&\mathbb{E}_{\tilde{t}\sim \mathscr{E}\backslash t} \left[ \max(0, d(h,r,t)-d(h, r, \tilde{t})+\Delta) \right] \\+&\mathbb{E}_{\tilde{h}\sim \mathscr{E}\backslash h} \left[ \max(0, d(h,r,t)-d(\tilde{h}, r, t)+\Delta) \right],
\end{aligned}
\end{equation}
where $d(h,r,t)=\|f(h)+g(r)-f(t)\|_2$ is a distance function and $\Delta$ is a margin hyper-parameter.
\end{definition}
\begin{definition}
    Given a subset $\mathcal{T}\subset KG$, assume that $\mathcal{X}_{h,r,t}=1$ indicates $(h,r,t)\in \mathcal{T}$ and $\mathcal{X}_{h,r,t}=0$ indicates $(h,r,t)\notin \mathcal{T}$. We define $d_{h,r}^{out}=\sum_{t\in \mathscr{E}}\mathcal{X}_{h,r,t}$ as the out-degree of $h$ under relation $r$ with respect to $\mathcal{T}$, and $d_{t,r}^{in}=\sum_{h\in \mathscr{E}}\mathcal{X}_{h,r,t}$ as the in-degree of $t$ under relation $r$ with respect to $\mathcal{T}$.
\end{definition}
We further give the following assumptions:

\begin{assumption}
\label{assump:delta}
    $\Delta>d(h_1,r_1,t_1)-d(h_2,r_2,t_2)$ for all $h_1,t_1,h_2,t_2\in \mathscr{E}$ and $r_1,r_2\in \mathscr{R}$.
\end{assumption}

\begin{definition}
    Structurally similar molecules. Assume that $h, t\in \mathscr{E}$ are two molecular entities, and $r_s\in \mathscr{R}$ is a relation type indicating structural similarity. $h$ and $t$ are structurally similar if and only if $(h,r_s,t)\in KG$ and $(t,r_s,h)\in KG$. We use $S=\{(h, r_s, t)|(h, r_s, t)\in KG\}$ to denote the set of structural similar relations and assume that $|S|\ge 4$. 
\end{definition}

\begin{assumption}
\label{assump:sym}
Symmetry of $r_s$: $\forall h,t\in \mathscr{E}, (h,r_s,t)\in KG\Leftrightarrow (t,r_s,h)\in KG$.
\end{assumption}

\begin{assumption}
\label{assump:isotropy}
    Isotropy of $r_s$: $f(h)+g(r_s)-f(t)$ is uniformly distributed in all directions for $(h,r_s,t)\in S$ or arbitrary $h,t\in \mathscr{E}$. Further, we use $\alpha_{(h,r_s,t)}$ to denote the angle between an arbitrary vector $x$ and $f(t)+g(r_s)-f(h)$, and hypothesize that for all $x\in \mathbb{R}^d$, the following holds:
    \begin{equation}
    \begin{aligned}
        &-\epsilon \le \sum_{(h,r_s,t)\in S}\frac{\cos 2\alpha_{(h,r_s,t)}}{d(h,r_s,t)}\le \epsilon,\\
        &-\epsilon \le \sum_{h,t\in \mathscr{E}}\frac{\cos 2\alpha_{(h,r_s,t)}}{d(h,r_s,t)}\le \epsilon,
    \end{aligned}
    \end{equation}
    where $\epsilon>0$ is a constant that is close to 0.
\end{assumption}

\begin{assumption}
\label{assump:sparsity}
    Sparsity of $r_s$: $d_{h,r_s}^{out}\le \frac{N}{2}$ and $d_{t,r_s}^{in}\le \frac{N}{2}$ for all $h, t\in \mathscr{E}$.
\end{assumption}

\begin{assumption}
\label{assump:overweigh}
    Distance margin between positive and negative samples. If $(h,r_s,t)\in KG$, $(h,r_s,\tilde{t})\notin KG$ and $(\tilde{h},r_s,t)\notin KG$, the following holds:
    $$\frac{1}{d(h,r_s,t)}-\frac{1}{d(h,r_s,\tilde{t})}\ge \epsilon, \frac{1}{d(h,r_s,t)}-\frac{1}{d(\tilde{h},r_s,t)}\ge \epsilon.$$
\end{assumption}

\begin{lemma}
\label{lem:app_structure}
For structurally similar molecules $h$ and $t$, the following holds:
\begin{equation}
    \mathcal{L}_{kge}(h,r_s,t)\propto 2\|f(h)-f(t)\|_2-\mathbb{E}_{\tilde{t}\sim \mathscr{E}\backslash t}  \|f(h)-f(\tilde{t})\|_2 -\mathbb{E}_{\tilde{h}\sim \mathscr{E}\backslash h}  \|f(\tilde{h})-f(t)\|_2 .
\end{equation}
\end{lemma}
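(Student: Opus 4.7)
The plan is to manipulate $\mathcal{L}_{kge}(h,r_s,t)$ into the claimed deep-metric-learning form in two stages: first strip off the hinge, then convert every TransE distance $d(\cdot,r_s,\cdot)$ into a Euclidean distance between entity embeddings $\|f(\cdot)-f(\cdot)\|$. First I would invoke Assumption \ref{assump:delta}: since $\Delta$ dominates any distance gap, each $\max(0,\cdot)$ in Eq. \ref{equ:kge} is active, so
\begin{equation*}
\mathcal{L}_{kge}(h,r_s,t)=2\,d(h,r_s,t)+2\Delta-\mathbb{E}_{\tilde t\sim\mathscr{E}\backslash t}\,d(h,r_s,\tilde t)-\mathbb{E}_{\tilde h\sim\mathscr{E}\backslash h}\,d(\tilde h,r_s,t).
\end{equation*}
The additive $2\Delta$ is a constant that can be absorbed into the $\propto$ sign, so the remaining task is purely to convert each of the four distances.

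Second, I would argue that under Assumption \ref{assump:sym} the relation embedding $g(r_s)$ is effectively zero on the directions that matter. Because $r_s$ is symmetric, the anchor triplet $(h,r_s,t)$ comes paired with $(t,r_s,h)$ in $S$, and the KGE loss penalises both $\|f(h)+g(r_s)-f(t)\|$ and $\|f(t)+g(r_s)-f(h)\|$; the only way both can be small is for $g(r_s)$ to be (approximately) orthogonal to $f(h)-f(t)$. Iterating this over all pairs in $S$ and using Assumption \ref{assump:isotropy} (the residual vectors $f(h)+g(r_s)-f(t)$ are isotropic in $\mathbb{R}^n$) shows that $g(r_s)$ must be approximately orthogonal to a spanning set of directions, forcing $g(r_s)\approx 0$. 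Plugging this in yields $d(h,r_s,t)\approx \|f(h)-f(t)\|$ for the anchor term.

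Third, for the two expected negative terms I would use the isotropy assumption directly. Writing $f(h)-f(\tilde t)=[f(h)+g(r_s)-f(\tilde t)]-g(r_s)$ and applying the law of cosines gives
\begin{equation*}
\|f(h)-f(\tilde t)\|=d(h,r_s,\tilde t)-\|g(r_s)\|\cos\alpha_{(h,r_s,\tilde t)}+\tfrac{\|g(r_s)\|^2(1-\cos 2\alpha_{(h,r_s,\tilde t)})}{4\,d(h,r_s,\tilde t)}+O(\|g(r_s)\|^3).
\end{equation*}
Averaging over $\tilde t\sim\mathscr{E}\backslash t$ kills the $\cos\alpha$ term by isotropy of direction, and the $\cos 2\alpha/d$ term is bounded by $\epsilon$ via the explicit hypothesis in Assumption \ref{assump:isotropy}; combined with the smallness of $\|g(r_s)\|$ from step two, one gets $\mathbb{E}_{\tilde t}\,d(h,r_s,\tilde t)\approx \mathbb{E}_{\tilde t}\,\|f(h)-f(\tilde t)\|$ up to a negligible additive constant, and symmetrically for the $\tilde h$ average. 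Assumptions \ref{assump:sparsity} and \ref{assump:overweigh} would be used here to bound the error uniformly, absorbing the harmless constants into $\propto$.

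The main obstacle will be step two: making precise the informal claim that $g(r_s)\approx 0$. The symmetry argument only forces orthogonality pairwise, so the essential work is to combine the pairwise orthogonality conditions with the isotropy hypothesis and the size condition $|S|\ge 4$ to conclude that the only vector orthogonal to enough isotropically-spread directions is (approximately) the zero vector. Once that bridge is built, everything else is triangle-inequality accounting and absorbing $O(\epsilon)$ remainders into the additive constant implicit in $\propto$.
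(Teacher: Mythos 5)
Your skeleton matches the paper's proof at the top level: use Assumption \ref{assump:delta} to keep every hinge active and linearize $\mathcal{L}_{kge}$, reduce the lemma to showing $g(r_s)=0$, then substitute $d(\cdot,r_s,\cdot)=\|f(\cdot)-f(\cdot)\|_2$ and absorb $2\Delta$ into the proportionality. But your load-bearing step two rests on a false geometric claim. Write $v=f(h)-f(t)$ and $g=g(r_s)$. The paired positive terms are $\|v+g\|_2$ and $\|v-g\|_2$, and by the parallelogram law $\|v+g\|_2^2+\|v-g\|_2^2=2\|v\|_2^2+2\|g\|_2^2$, so a $g$ orthogonal to $v$ strictly \emph{inflates} both distances; keeping both small forces $\|g\|_2$ small regardless of its direction, and orthogonality is neither necessary nor helpful. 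Worse, for the linearized sum-of-norms objective the per-pair minimizers of $\|v+g\|_2+\|v-g\|_2$ are exactly the collinear segment $\{\lambda v:\lambda\in[-1,1]\}$ (the equality case of the triangle inequality) --- the opposite of the orthogonal complement. So the bridge you flag as the main obstacle ("orthogonal to enough isotropically-spread directions, hence zero") cannot be built as stated; the salvageable per-pair condition is collinearity, and even intersecting those segments over spanning directions would only address half the loss.

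The half you ignore is the negative-sampling contribution, which enters the optimization of $g(r_s)$ with the opposite sign: the loss rewards making $d(h,r_s,\tilde t)$ and $d(\tilde h,r_s,t)$ large, and a nonzero $g(r_s)$ could in principle serve that goal, so no conclusion about the minimizer follows from the positive terms alone. The paper's proof confronts this directly: it rewrites the expected negative terms as degree-weighted sums (Eq. \ref{equ:neg_tail}, Eq. \ref{equ:neg_head}), uses the symmetry of $r_s$ (Assumption \ref{assump:sym}, which gives $d^{out}_{h,r_s}=d^{in}_{h,r_s}$) to pair $(h,r_s,t)$ with $(t,r_s,h)$ so that the gradient $\partial\mathcal{L}/\partial g(r_s)$ vanishes exactly at $g(r_s)=0$, and then shows the Hessian is positive definite for \emph{every} $g(r_s)$ --- this is where Assumptions \ref{assump:isotropy} (the $\cos 2\alpha_{(h,r_s,t)}$ bounds), \ref{assump:sparsity}, \ref{assump:overweigh}, and the condition $|S|\ge 4$ are actually consumed, not in the averaging step where you deploy them. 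The payoff of that argument is that $g(r_s)=0$ is the exact global minimizer, so $d(h,r_s,t)=\|f(h)-f(t)\|_2$ holds identically and your step three's law-of-cosines expansion with $O(\|g(r_s)\|^3)$ bookkeeping is unnecessary. As written, then, the proposal has a genuine gap at its central step, and the repair is essentially the paper's first-order-condition-plus-convexity argument rather than a patch of the orthogonality heuristic.
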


\begin{proof}
    Our proof sketch is showing that optimizing KGE substantially leads to $g(r_s)=0$. Formally: 
    \begin{equation}
        \argmin_{g(r_s)}\mathbb{E}_{(h,r,t)\sim KG} \left[ \mathcal{L}_{kge}(h,r,t) \right] =0.
    \end{equation}
     We first rewrite Eq. \ref{equ:kge} in the following based on Assumption \ref{assump:delta}:
    \begin{equation}
    \begin{aligned}
        \mathcal{L}&=\mathbb{E}_{(h,r,t)\sim KG}\left[\mathcal{L}_{kge}(h,r,t)\right]\\&=\frac{|S|}{M}\sum_{(h,r_s,t)\in S}\left[2d(h,r_s,t)-\mathbb{E}_{\tilde{t}\sim \mathscr{E}\backslash t}\left[d(h,r_s,\tilde{t}) \right] -\mathbb{E}_{\tilde{h}\sim \mathscr{E}\backslash h} \left[ d(\tilde{h},r_s,t) \right] \right]\\
        &\quad +\frac{M-|S|}{M}\sum_{(h,r,t)\in KG\backslash S}\left[2d(h,r,t)-\mathbb{E}_{\tilde{t}\sim \mathscr{E}\backslash t} \left[ d(h,r,\tilde{t}) \right] -\mathbb{E}_{\tilde{h}\sim \mathscr{E}\backslash h} \left[ d(\tilde{h},r,t)\right] \right].
    \end{aligned}
    \end{equation}
    Following \citeApp{_qiu2018revisiting}, we rewrite negative sampling terms as follows:
    \begin{equation}
    \label{equ:neg_tail}
    \begin{aligned}
        \sum_{(h,r,t)\in S}\mathbb{E}_{\tilde{t}\sim \mathscr{E}\backslash t} \left[ d(h,r,\tilde{t}) \right] &=\frac{1}{N-1}\sum_{(h,r,t)\in S}\left[-d(h,r,t)+\sum_{\tilde{t}\in \mathscr{E}}d(h,r,\tilde{t})\right]\\
        &=\sum_{h,t\in \mathscr{E}, r\in \mathscr{R}}\frac{d^{out}_{h,r}}{N-1}d(h,r,t)-\frac{1}{N-1}\sum_{(h,r,t)\in S}d(h,r,t),
    \end{aligned}
    \end{equation}
    and:
    \begin{equation}
    \label{equ:neg_head}
    \begin{aligned}
        \sum_{(h,r,t)\in S}\mathbb{E}_{\tilde{h}\sim \mathscr{E}\backslash h} \left[ d(\tilde{h},r,t) \right] =\sum_{h,t\in \mathscr{E}, r\in \mathscr{R}}\frac{d^{in}_{t,r}}{N-1}d(h,r,t)-\frac{1}{N-1}\sum_{(h,r,t)\in KG}d(h,r,t),
    \end{aligned}
    \end{equation}
    Due to the symmetry of $r_s$ (Assumption \ref{assump:sym}), we can derive that $d_{h,r_s}^{out}=d_{h,r_s}^{in}$ for all $h\in \mathscr{E}$.
    
    As suggested in \citeApp{_bordes2014semantic}, we speculate a value independence for each $r\in \mathscr{R}$ given sufficient large embedding dimension $n$. Hence, we calculate the partial derivative with $g(r_s)$ as follows:
    \begin{equation}
    \label{equ:deriv}
        \begin{aligned}
            \frac{\partial \mathcal{L}}{\partial g(r_s)}
            &=\frac{2|S|N}{M(N - 1)}\sum_{(h,r_s,t)\in S}\frac{\partial d(h,r_s,t)}{\partial g(r_s)}-\frac{|S|} {|M|(N - 1)}\sum_{h,t\in \mathscr{E}}(d_{h,r_s}^{out} + d_{t,r_s}^{in})\frac{\partial d(h,r_s,t)}{\partial g(r_s)}\\
            &=\frac{2|S|N}{M(N - 1)}\sum_{(h,r_s,t)\in S}\frac{f(h)+g(r_s)-h(t)}{d(h,r_s,t)}\\
            &\quad-\frac{|S|} {|M|(N - 1)}\sum_{h,t\in \mathscr{E}}(d_{h,r_s}^{out} + d_{t,r_s}^{in})\frac{f(h)+g(r_s)-f(t)}{d(h,r_s,t)}\\
            &=\gamma N\sum_{(h,r_s,t)\in S}\left[\frac{f(h)+g(r_s)-f(t)}{d(h,r_s,t)}+\frac{f(t)+g(r_s)-f(h)}{d(t,r_s,h)}\right]\\&\quad-\frac{\gamma}{2}\sum_{h,t\in \mathscr{E}}\left[(d_{h,r_s}^{out}+d_{t,r_s}^{in})\frac{f(h)+g(r_s)-f(t)}{d(h,r_s,t)} +(d_{t,r_s}^{out}+d_{h,r_s}^{in})\frac{f(t)+g(r_s)-f(h)}{d(t,r_s,h)}\right]\\
            &=\gamma N\sum_{(h,r_s,t)\in S}\left[\frac{f(h)+g(r_s)-f(t)}{d(h,r_s,t)}+\frac{f(t)+g(r_s)-f(h)}{d(t,r_s,h)}\right]\\&\quad-\frac{\gamma}{2}\sum_{h,t\in \mathscr{E}}(d_{h,r_s}^{out}+d_{t,r_s}^{in})\left[\frac{f(h)+g(r_s)-f(t)}{d(h,r_s,t)}+\frac{f(t)+g(r_s)-f(h)}{d(t,r_s,h)}\right],
        \end{aligned}
    \end{equation}
    where $\gamma=\frac{2|S|}{M(N-1)}$. If $g(r_s)=0$, we can derive that $d(h,r_s,t)=\|h-t\|_2=d(t,r_s,h)$ and that $\frac{\partial \mathcal{L}}{\partial g(r_s)}=0$. 
    
    We further calculate the Hessian matrix $\mathcal{H}$ in the following:
    \begin{equation}
    \begin{aligned}
        \mathcal{H}&=\frac{\partial^2 \mathcal{L}}{\partial g(r_s)^2}\\
            &=\gamma N\sum_{(h,r_s,t)\in S}\frac{\partial^2 d(h,r_s,t)}{\partial g(r_s)^2}-\frac{\gamma} {2}\sum_{h,t\in \mathscr{E}}(d_{h,r_s}^{out} + d_{t,r_s}^{in})\frac{\partial^2 d(h,r_s,t)}{\partial g(r_s)^2}\\
            &=\gamma N\sum_{(h,r_s,t)\in S}\frac{d^2(h,r_s,t)I-[f(t)+g(r_s)-f(h)][f(t)+g(r_s)-f(h)]^T}{d^3(h,r_s,t)}\\
            &\quad -\frac{\gamma}{2}\sum_{h,t\in \mathscr{E}}(d_{h,r_s}^{out} + d_{t,r_s}^{in})\frac{d^2(h,r_s,t)I-[f(t)+g(r_s)-f(h)][f(t)+g(r_s)-f(h)]^T}{d^3(h,r_s,t)}.
    \end{aligned}
    \end{equation}
    For an arbitrary vector $x\in \mathbb{R}^d$ of unit length ($\|x\|_2=1$), we show that:
    \begin{equation}
    \label{eq:hessian}
    \begin{aligned}
        x^T\mathcal{H}x&=\gamma N\sum_{(h,r_s,t)\in S}\frac{d^2(h,r_s,t)x^Tx-(x^T[f(t)+g(r_s)-f(h)])^2}{d^3(h,r_s,t)}\\
        &\quad -\frac{\gamma}{2}\sum_{h,t\in \mathscr{E}}(d_{h,r_s}^{out} + d_{t,r_s}^{in})\frac{d^2(h,r_s,t)x^Tx-(x^T[f(t)+g(r_s)-f(h)])^2}{d^3(h,r_s,t)}\\
        &=\gamma N\sum_{(h,r_s,t)\in S}\frac{1-s^2(x, f(t)+g(r_s)-f(h))}{d(h,r_s,t)}\\
        &\quad -\frac{\gamma}{2}\sum_{h,t\in \mathscr{E}}(d_{h,r_s}^{out} + d_{t,r_s}^{in})\frac{1-s^2(x, f(t)+g(r_s)-f(h))}{d(h,r_s,t)}\\
        &=\frac{\gamma N}{2}\sum_{(h,r_s,t)\in S}\frac{1-\cos 2\alpha_{(h,r_s,t)}}{d(h,r_s,t)} -\frac{\gamma}{4}\sum_{h,t\in \mathscr{E}}(d_{h,r_s}^{out} + d_{t,r_s}^{in})\frac{1-\cos 2\alpha_{(h,r_s,t)}}{d(h,r_s,t)}\\
        &\ge\frac{\gamma N}{2}\sum_{(h,r_s,t)\in S}\frac{1}{d(h,r_s,t)} -\frac{\gamma}{4}\sum_{h,t\in \mathscr{E}}(d_{h,r_s}^{out} + d_{t,r_s}^{in})\frac{1}{d(h,r_s,t)}-\frac{3}{4}\gamma N\epsilon\\
        &=\frac{\gamma}{4}\left[\sum_{(h,r_s,t)\in S}\left[\sum_{\tilde{t}\in \mathscr{E},(h,r_s,\tilde{t})\notin S}\left(\frac{1}{d(h,r_s,t)}-\frac{1}{d(h,r_s,\tilde{t})}\right)\right.\right.\\
        &\qquad \qquad \qquad \left.\left.+\sum_{\tilde{h}\in \mathscr{E}, (\tilde{h},r_s,t)\notin S}\left(\frac{1}{d(h,r_s,t)}-\frac{1}{d(\tilde{h},r_s,t)}\right)\right]\right]-\frac{3}{4}\gamma N\epsilon\\
        &\ge \frac{\gamma}{4}(2N-d^{out}_{h,r_s}-d^{in}_{t,r_s})\epsilon-\frac{3}{4}\gamma N\epsilon\\
        &\ge \frac{\gamma N|S|\epsilon}{4}-\frac{3}{4}\gamma N\epsilon\\
        &\ge 0,
    \end{aligned}
    \end{equation}
    where $s(\cdot,\cdot)$ is cosine similarity. Eq. \ref{eq:hessian} shows that $\frac{\partial^2 \mathcal{L}}{\partial g(r_s)^2}$ is positive definite for any $g(r_s)\in \mathbb{R}^d$. Therefore, $g(r_s)=0$ is the minimum point of $\mathcal{L}$.

    Finally we derive the following based on Assumption \ref{assump:delta}:
    \begin{equation}
    \begin{aligned}
        \mathcal{L}_{kge}(h,r_s,t)&= \mathbb{E}_{\tilde{t}\sim \mathscr{E}\backslash t} \left[ (\|f(h)+g(r_s)-f(t)\|_2-\|f(h)+g(r_s)-f(\tilde{t})\|_2+\Delta) \right] \\
        &\quad +\mathbb{E}_{\tilde{h}\sim \mathscr{E}\backslash h} \left[ \|f(h)+g(r_s)-f(t)\|_2-\|f(\tilde{h})+g(r_s)-f(t)\|_2+\Delta \right] \\
        &\propto 2\|f(h)-f(t)\|_2-\mathbb{E}_{\tilde{t}\sim \mathscr{E}\backslash t} \|f(h)-f(\tilde{t})\|_2  -\mathbb{E}_{\tilde{h}\sim \mathscr{E}\backslash h}  \|f(\tilde{h})-f(t)\|_2.
    \end{aligned}
    \end{equation}
\end{proof}

\begin{definition}
    Functionally similar molecules. Assume that $h,t\in \mathscr{E}$ are two molecular entities. $h$ and $t$ are functionally similar if there exists some $o\in \mathscr{E}$ and $r\in \mathscr{R}$ that satisfies:
    $(h,r,o)\in KG, (t,r,o)\in KG$ or $(o,r,h)\in KG, (o,r,t)\in KG$. We define: 
    
    \begin{equation}
    \begin{aligned}
        &\mathcal{I}_1=\{(h,r,o), (t,r,o)|(h,r,o)\in KG,(t,r,o)\in KG\},\\ &\mathcal{I}_2=\{(o,r,h),(o,r,t)|(o,r,h)\in KG,(o,r,t)\in KG\},
    \end{aligned}
    \end{equation}
    
    and $\mathcal{I}=\mathcal{I}_1\cup \mathcal{I}_2$. We further assume that $|\mathcal{I}|\ll n$, indicating there are not too many intermediate entities connecting $h$ and $t$, which is common among biomedical knowledge bases. 
\end{definition}

\begin{lemma}
\label{lem:app_function}
For functionally similar molecules $h$ and $t$, the following holds:
\begin{equation}
    \|f(h)-f(t)\|\le \alpha \mathbb{E}_{(e_1,r,e_2)\sim \mathcal{I}}\left[ \mathcal{L}_{kge}(e_1,r,e_2) \right] + C,
\end{equation}
where $\alpha\approx 1, C\approx 0$ are constants.
\end{lemma}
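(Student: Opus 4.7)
The plan is to combine the triangle inequality on $\|f(h)-f(t)\|$ with a reformulation of the KGE loss that is available under Assumption A.1. I would first split the proof by the two cases that define functional similarity. In the case $(h,r,o),(t,r,o)\in KG$, the identity
\[
f(h)-f(t) \;=\; [f(h)+g(r)-f(o)] - [f(t)+g(r)-f(o)]
\]
together with the triangle inequality yields $\|f(h)-f(t)\|\le d(h,r,o)+d(t,r,o)$. The symmetric case $(o,r,h),(o,r,t)\in KG$ is handled identically via $f(h)-f(t)=[f(o)+g(r)-f(t)]-[f(o)+g(r)-f(h)]$, giving $\|f(h)-f(t)\|\le d(o,r,h)+d(o,r,t)$.

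Next, I would convert each $d(\cdot,\cdot,\cdot)$ into $\mathcal{L}_{kge}$. Under Assumption A.1 every $\max(\cdot)$ inside Eq.~A.12 is attained by its second argument, so
\[
\mathcal{L}_{kge}(e_1,r,e_2) \;=\; 2\,d(e_1,r,e_2) \;-\; \mathbb{E}_{\tilde{e}_2}\!\left[d(e_1,r,\tilde{e}_2)\right] \;-\; \mathbb{E}_{\tilde{e}_1}\!\left[d(\tilde{e}_1,r,e_2)\right] \;+\; 2\Delta,
\]
which rearranges to $d(e_1,r,e_2)=\tfrac{1}{2}\mathcal{L}_{kge}(e_1,r,e_2)+c(e_1,r,e_2)$, where $c(e_1,r,e_2)=\tfrac{1}{2}\bigl[\mathbb{E}_{\tilde{e}_2}d(e_1,r,\tilde{e}_2)+\mathbb{E}_{\tilde{e}_1}d(\tilde{e}_1,r,e_2)\bigr]-\Delta$. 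Plugging this back into the triangle-inequality bound gives, for any single intermediate $o$, the estimate $\|f(h)-f(t)\|\le \tfrac{1}{2}[\mathcal{L}_{kge}(e_1,r,o)+\mathcal{L}_{kge}(e_2,r,o)]+(c_1+c_2)$. Averaging this estimate over all intermediate entities that witness functional similarity and observing that the elements of $\mathcal{I}$ come in pairs of the form $\{(h,r,o),(t,r,o)\}$ or $\{(o,r,h),(o,r,t)\}$, the averaged loss terms on the right-hand side reassemble exactly as $\mathbb{E}_{(e_1,r,e_2)\sim\mathcal{I}}[\mathcal{L}_{kge}(e_1,r,e_2)]$ with leading coefficient $\alpha=1$.

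The remaining task is to argue that the aggregated residual $C:=\mathbb{E}_{(e_1,r,e_2)\sim\mathcal{I}}[2\,c(e_1,r,e_2)]$ is close to zero. Informally, at convergence of TransE-style training the expected distance to a uniformly sampled corrupted head/tail saturates near the margin (positive triplets are pulled to small distance while negatives are pushed apart only until the hinge deactivates), so $\mathbb{E}_{\tilde{e}}d(\cdot,r,\tilde e)\approx \Delta$ and each $c\approx 0$. I would state this as an equilibrium remark and conclude $C\approx 0$, giving the claimed bound.

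The main obstacle is precisely that last step: making the residual $C\approx 0$ quantitative. The paper's existing assumptions (A.2--A.5) are tailored to the symmetric structural-similarity relation $r_s$ used in Lemma~A.1 and do not directly apply to an arbitrary relation $r$ bridging functionally related molecules. I would therefore either (i) invoke an explicit convergence assumption of the form $\mathbb{E}_{\tilde e}d(\cdot,r,\tilde e)=\Delta+o(1)$, or (ii) introduce an analogue of Assumption~A.4 ensuring concentration of the expected negative-sample distance around $\Delta$. Under either of these, the constant $C$ collapses to the stated $C\approx 0$; without them, the inequality still holds with $C$ replaced by a data-dependent offset, and the qualitative meaning of the lemma (KGE upper-bounds the embedding gap between functionally similar molecules) is preserved.
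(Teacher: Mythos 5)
Your proof is correct, but it takes a genuinely more direct route than the paper, and your closing worry about the residual is unfounded. The paper shares your starting point (the triangle inequality through an intermediate entity, giving $\|f(h)-f(t)\|\le d(h,r,o)+d(t,r,o)$, then passing from the minimum over witnesses to the average $\frac{2}{|\mathcal{I}|}\sum_{(e_1,r,e_2)\in\mathcal{I}}d(e_1,r,e_2)$), but it then converts distances into the loss via the global degree-based rewriting of \citeApp{_qiu2018revisiting}: $\mathcal{L}'$ is expressed as $\frac{1}{|\mathcal{I}|(N-1)}\sum_{e_1,e_2,r}(2N\mathcal{X}_{e_1,r,e_2}-d^{out}_{e_1,r}-d^{in}_{e_2,r})d(e_1,r,e_2)+2\Delta$, the degrees are bounded by $|\mathcal{I}|$, and the cross terms over triplets outside $\mathcal{I}$ are controlled by the uniform bound $d\le\eta+\Delta$; this yields $\alpha=\frac{N-1}{N-|\mathcal{I}|}$ and $C=\frac{2(\Delta+N\eta)}{N-|\mathcal{I}|}$, so the paper needs both $|\mathcal{I}|\ll N$ and $\eta\approx 0$ to conclude $\alpha\approx 1$, $C\approx 0$. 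You instead exploit that under Assumption A.1 the hinge is always active, so the loss is affine per triplet, and you rearrange $d=\frac{1}{2}\mathcal{L}_{kge}+c$ pointwise; the pairing structure of $\mathcal{I}$ then reassembles the expectation with $\alpha=1$ exactly, with no degree counting and no $N$-dependence in $\alpha$. Your only hesitation---that bounding $C$ needs a new convergence or concentration assumption---dissolves under the paper's own hypotheses: Assumption A.1 holds for \emph{arbitrary} entity pairs, hence also for corrupted triplets, giving $d(e_1,r,\tilde{e}_2)\le\eta+\Delta$ with $\eta=\min_{(e_1,r,e_2)\in KG}d(e_1,r,e_2)$, so
\begin{equation*}
c=\tfrac{1}{2}\bigl[\mathbb{E}_{\tilde{e}_2}d(e_1,r,\tilde{e}_2)+\mathbb{E}_{\tilde{e}_1}d(\tilde{e}_1,r,e_2)\bigr]-\Delta\le\eta,
\end{equation*}
and the paper explicitly assumes $\eta\approx 0$ in the middle of this very proof. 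With that observation your residual satisfies $C\le 2\eta\approx 0$, which is in fact slightly tighter than the paper's $C\approx\frac{2\Delta}{N}+2\eta$, so your version is complete as stated and marginally sharper---no equilibrium remark or analogue of Assumption A.4 is required.
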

\begin{proof}
    Following \citeApp{_qiu2018revisiting}, we rewrite $\mathcal{L}_{kge}$ as follows based on Eq. \ref{equ:neg_tail} and Eq. \ref{equ:neg_head}:
    \begin{equation}
    \label{equ:kge_rewrite}
        \begin{aligned}
        \mathcal{L}'&=\mathbb{E}_{(e_1,r,e_2)\sim \mathcal{I}} \left[ \mathcal{L}_{kge}(e_1,r,e_2) \right] \\&=\frac{1}{|\mathcal{I}|}\sum_{e_1,e_2\in \mathscr{E}, r\in \mathscr{R}}\frac{(2N\mathcal{X}_{e_1,r,e_2}-d_{e_1,r}^{out}-d_{e_2,r}^{in})d(e_1,r,e_2)}{N-1}+2\Delta,
        \end{aligned}
    \end{equation}
    where $\mathcal{X}_{e_1,r,e_2}=1$ indicates $(e_1,r,e_2)\in \mathcal{I}$ and $\mathcal{X}_{e_1,r,e_2}=0$ indicates $(e_1,r,e_2)\notin \mathcal{I}$.
    Further, the following inequalities hold:
    \begin{equation}
    \label{equ:degree}
        d_{e_1,r}^{out}=\sum_{e_2\in \mathscr{E}}\mathcal{X}_{e_1,r,e_2}\le |\mathcal{I}|,d_{e_2,r}^{in}=\sum_{e_1\in \mathscr{E}}\mathcal{X}_{e_1,r,e_2}\le |\mathcal{I}|.
    \end{equation}

    Based on Assumption. \ref{assump:delta} we have $d(e_1,r,e_2)\le \eta+\Delta$ where $\eta=\min_{(e_1,r,e_2)\in KG}[d(e_1,r,e_2)]$, and we assume that $\eta\approx 0$.
    
    Based on Eq. \ref{equ:kge_rewrite} and Eq. \ref{equ:degree}, we have:
    \begin{equation}
    \begin{aligned}
        \sum_{(e_1,r,e_2)\in \mathcal{I}}d(e_1,r,e_2)&\le \frac{1}{2(N-|\mathcal{I}|)}\sum_{(e_1,r,e_2)\in \mathcal{I}}(2N-d_{e_1,r}^{out}-d_{e_2,r}^{in})d(e_1,r,e_2)\\
        &=\frac{1}{2(N-|\mathcal{I}|)}\left[|\mathcal{I}|(N-1)(\mathcal{L}'-2\Delta)+\sum_{(e_1,r,e_2)\notin \mathcal{I}}(d_{e_1,r}^{out}+d_{e_2,r}^{in})d(e_1,r,e_2)\right]\\
        &\le \frac{1}{2(N-|\mathcal{I}|)}\left[|\mathcal{I}|(N-1)(\mathcal{L}'-2\Delta)+2|\mathcal{I}|N(\Delta+\eta)\right]\\
        &=\frac{|\mathcal{I}|(N-1)}{2(N-|\mathcal{I}|)}\mathcal{L}'+\frac{|\mathcal{I}|(N-1)\Delta+|\mathcal{I}|N(\Delta+\eta)}{N-|\mathcal{I}|}\\
        &=\frac{|\mathcal{I}|(N-1)}{2(N-|\mathcal{I}|)}\mathcal{L}'+\frac{|\mathcal{I}|(\Delta+N\eta)}{N-|\mathcal{I}|}
    \end{aligned}
    \end{equation}
    Then we have:
    \begin{equation}
    \begin{aligned}
        \|f(h)-f(t)\|&\le \min\left\{\min_{(h, r, o)\in \mathcal{I}_1} \{d(h,r,o)+d(t,r,o)\},\min_{(o, r, h)\in \mathcal{I}_2}\{d(o,r,h)+d(o,r,t)\}\right\}\\
        &\le \min\left\{\frac{2}{|\mathcal{I}_1|}\sum_{(e_1,r,e_2)\in \mathcal{I}_1}d(e_1,r,e_2),\frac{2}{|\mathcal{I}_2|}\sum_{(e_1,r,e_2)\in \mathcal{I}_2}d(e_1,r,e_2)\right\}\\
        &\le \frac{2}{|\mathcal{I}|}\sum_{(e_1,r,e_2)\in \mathcal{I}}d(e_1,r,e_2)\\
        &\le \frac{N-1}{N-|\mathcal{I}|}\mathcal{L}'+\frac{2(\Delta+N\eta)}{N-|\mathcal{I}|}\\
        &=\alpha \mathcal{L}'+C
    \end{aligned}
    \end{equation}
Since $|\mathcal{I}|\ll N$ and $\eta\approx 0$, we derive that $\alpha\approx 1$ and $C\approx 0$.
\end{proof}
\newpage

\subsection{Analysis of knowledge graph embedding}
\begin{figure*} [htpb]
\centering
    \includegraphics[width=\linewidth]{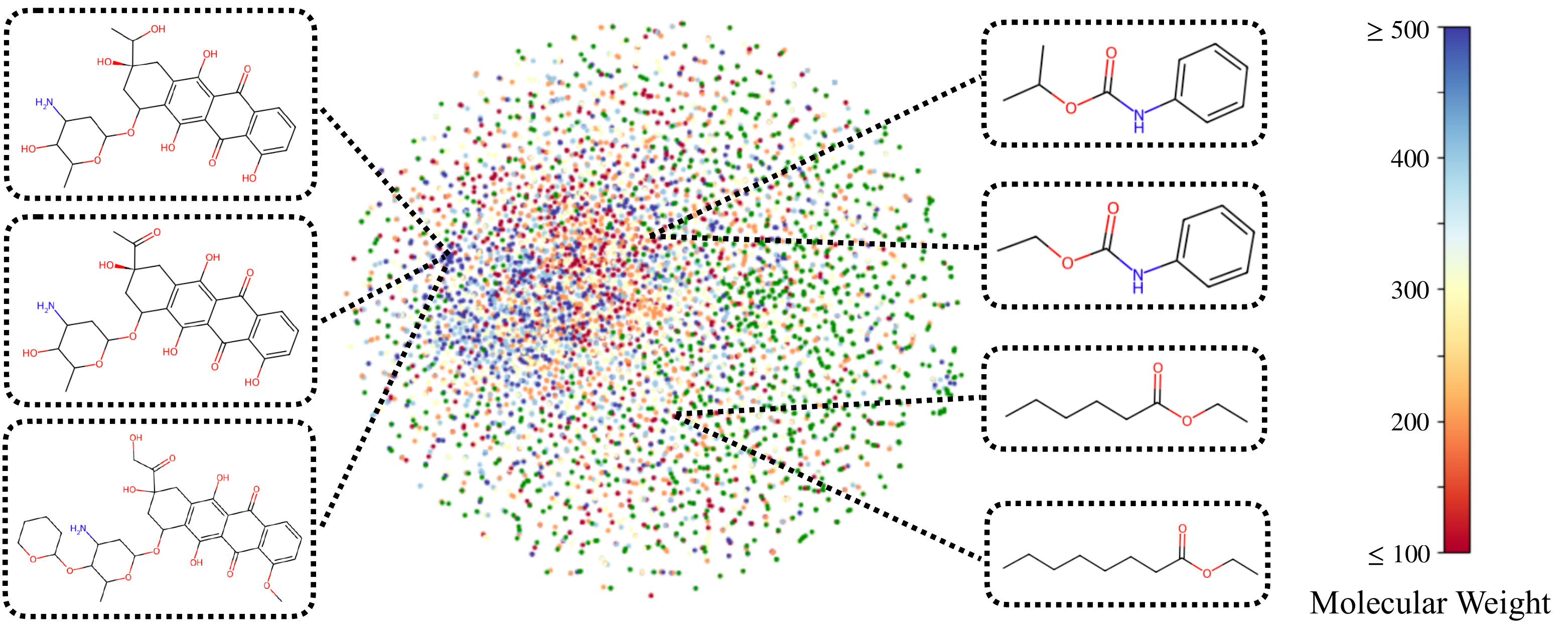}
\caption{Visualization of knowledge graph embeddings. Green dots represent entities that are not molecules. Other dots are colored based on molecular weight. We also present molecules that are structurally similar.}
\label{fig:kge}
\end{figure*}

\begin{table}[htpb]
\centering
\caption{Average distance between different molecules}
\label{tab:dist}
\begin{tabular}{lccc}
\toprule
Molecules     & structurally similar & functionally similar & random \\
\midrule
Avg. distance & 1.235                & 1.287                & 1.410  \\
\bottomrule
\end{tabular}

\end{table}

In this section, we present additional analysis of knowledge graph embeddings. In Fig. \ref{fig:kge} we illustrate the embeddings of MolFM knowledge encoder for 5,000 randomly sampled entities from the knowledge graph. These embeddings are then visualized using TSNE \citeApp{_van2008visualizing}, with molecules being color-coded based on their molecular weights. We also include randomly selected molecules that exhibit similar molecular structures or functions. Notably, Fig. \ref{fig:kge} demonstrates that the learned knowledge features show distinct clustering trends for structurally or functionally similar molecules. For instance, all three molecules on the left of the figure contain 4-amino-5-hydroxy-6-methyloxan-2-yl groups, and their pairwise Morgan fingerprint similarity \citeApp{_rogers2010extended} is no less than 0.78. 

Furthermore, we calculate the average distance between structurally similar molecules, functionally similar molecules and random molecules in Tab. \ref{tab:dist}. Though the distance between molecules sharing similar structures or functions are not close to 0, they display a significant margin compared to randomly selected molecules. In our experiments, we set a relatively small $\Delta=0.2$ to stabilize training, and the gradient is clipped to zero if the margin between positive samples and negative samples exceeds $\Delta$, which prohibits further optimization. However, it's still worth noting that KGE substantially brings structurally or functionally similar molecules closer while pushing dissimilar molecules apart.

\subsection{Pre-training dataset and knowledge graph details\label{sec:app_data}}
We utilize the same molecule-text pairs as introduced by \citeApp{_su2022molecular}. This dataset contains 15,613 molecules collected from PubChem \citeApp{_kim2016pubchem}, a comprehensive database of chemical substances and their biological activities as well as 37M paragraphs from S2ORC \citeApp{_lo2020s2orc}, a versatile corpus for text mining in scientific papers. \citeApp{_su2022molecular} utilizes simple rules such as using molecular names as queries to obtain molecule-text pairs. Then, we build a knowledge graph for the 15,613 molecules and more with the following steps: 

\textbf{Aligning entities in different databases.} The knowledge graph focuses on drugs (molecules), proteins (targets), diseases and other biomedical entities. We collect additional molecules from DrugBank \citeApp{_wishart2018drugbank}, a public database containing structured drug information, and perform duplicate elimination by comparing the isomeric SMILES strings to the 15,613 molecules in our pre-training data. Proteins are identified using Uniprot \citeApp{_uniprot2019uniprot}, a widely used protein database. We identify diseases and other entities using MeSH (Medical Subject Headings) \citeApp{_lipscomb2000medical}, a standard vocabulary thesaurus maintained by U.S. National Library of Medicine. 

\textbf{Building connections between entities.} The knowledge graph consists of relations including drug-target interaction, drug-drug similarity relationship, drug-drug interaction, and drug-disease association. We build these connections in the following:

For drug-target interactions, we collect drug targets, drug enzymes, drug carriers, and drug transporters from DrugBank. Furthermore, we incorporate BindingDB \citeApp{_gilson2016bindingdb}, a public database of biomolecular interactions based on binding affinities. We compare the isomeric SMILES of our molecules with the BindingDB compounds, and extract their protein targets with binding affinity values $Ki\le 10 nM$.

For drug-drug similarity relationships, we leverage MHFP \citeApp{_probst2018probabilistic}, an efficient molecular fingerprint to find $k$-nearest neighbors from all the molecules in our knowledge graph (we use $k=10$ in the study). We further compare the RDKit fingerprint similarity \citeApp{_schneider2015get} between the molecule with the $k$ candidates, and use a threshold of 0.8 to build drug-drug similarity relations. In cases where none of the candidates satisfies the threshold, we further lower the threshold to 0.6 to ensure connectivity.

For drug-drug interactions, we adopt relationships from DrugBank, and further categorize them into 12 classes based on the patterns of their textual description, including \textit{increased activities}, \textit{decreased activities}, \textit{increase risk/severity of adverse effect}, \textit{decrease risk/severity of adverse effect}, \textit{increased metabolism}, \textit{decreased metabolism}, \textit{increase of therapeutic efficacy}, \textit{decrease of therapeutic efficacy}, \textit{increased excretion rate}, \textit{decreased excretion rate}, \textit{increased serum concentration}, \textit{decreased serum concentration}.

For relationships between drugs, diseases and other entities, we collect data from the online platform of FORUM \citeApp{_delmas2021building}, a knowledge base that supports queries for PubChem molecules. We select the most trustworthy associations with $q\_value<10^{-6}$.

The overall statistics of our knowledge graph are presented in Tab. \ref{tab:kg_stat}.
\begin{table}[htpb]\small
\centering
\setlength\tabcolsep{4pt}
\caption{Statistics of entities and relations of our knowledge graph. ddi denotes drug-drug interaction.}
\label{tab:kg_stat}
\begin{tabular}{llll}
\toprule
\multicolumn{2}{l}{Entities}                           & ddi: increased metabolism             & 110,958    \\ \cline{1-2}
molecules                                & 29,043     & ddi: decreased metabolism             & 288,010    \\
diseases                                 & 19,655     & ddi: increase of therapeutic efficacy & 46,492     \\
proteins                                 & 403         & ddi: decrease of therapeutic efficacy & 211,108    \\
All                                      & 49,111     & ddi: increased excretion rate         & 56,768     \\ \cline{1-2}
Relations                                &             & ddi: decreased excretion rate         & 390,120    \\ \cline{1-2}
drug-protein interaction                 & 23,870     & ddi: increased serum concentration    & 79,536     \\
ddi: increased activities          & 294,738    & ddi: decreased serum concentration    & 25,048     \\
ddi: decreased activities                     & 82,712     & drug-drug similarity             & 95,804     \\
ddi: increase risk/severity of adverse effect & 1,044,749 & drug-disease                     & 499,745    \\
ddi: decrease risk/severity of adverse effect & 880         & All                              & 3,253,238 \\
\bottomrule
\end{tabular}
\end{table}

\subsection{Downstream task details\label{sec:app_downstream}}
Here we provide the implementation details for fine-tuning MolFM and other baseline models. For all fine-tuning experiments, we use Adam optimizer with a weight decay of $10^{-5}$ and select a learning rate from $\{10^{-4}, 3\times 10^{-4},10^{-3}\}$. We run experiments for either $100$ or $200$ epochs with 3 different random seeds. We employ early-stopping with a patience of $20$ epochs. 

\textbf{Cross-modal retrieval.} We evaluate our model on the modified PCdes \citeApp{_zeng2022deep} dataset. The original PCdes is collected from PubChem and consists of 15K molecules. We remove 8 molecules whose SMILES strings could not be transformed into a 2D graph by RDKit, and filter out 3,880 molecules that have 
appeared in our pre-training dataset to prevent information leakage. We adopt Scaffold split \citeApp{_bemis1996properties} instead of random split to evaluate the generalization capability of retrieval models with a train/validation/test ratio of 7:1:2. We conduct both paragraph-level and sentence-level cross-modal retrieval. In paragraph-level retrieval, we use the whole description for the molecule as text input. In sentence-level retrieval, we randomly pick one sentence for each molecule as text input. During fine-tuning, we optimize max of hinge loss between the cosine similarity of structural and textual representations within a minibatch of size 32. For SciBERT \citeApp{_beltagy2019scibert}, KV-PLM \citeApp{_zeng2022deep} and KV-PLM* \citeApp{_zeng2022deep}, we use the language model to simultaneously encode 1D SMILES strings and texts. As for the GraphMVP \citeApp{_graphmvp} baseline, we use GraphMVP to encode 2D molecular graphs and employ SciBERT to encode texts. 

\textbf{Molecule captioning.} We utilize the ChEBI-20 \citeApp{_edwards2021text2mol} dataset with 33,010 molecule-description pairs. We follow the original 8:1:1 train/validation/test split. Evaluation metrics include BLEU \citeApp{_papineni2002bleu}, ROUGE \citeApp{_lin2004rouge}, METEOR \citeApp{_banerjee2005meteor} and Text2Mol score \citeApp{_edwards2021text2mol}. GraphMVP shares the same architecture as MolFM, where atom features are concatenated with the outputs of the MolT5 \citeApp{_edwards2022translation} encoder. The concatenation result is then fed into the MolT5 decoder to generate molecular descriptions.

\textbf{Text-based molecule generation.} We conduct experiments on ChEBI-20 with the same split as molecule captioning. Evaluation metrics include BLEU, exact ratio (ratio of generated SMILES strings that are identical to the ground truth), valid ratio (ratio of generated SMILES strings that correspond to valid molecules), Levenshtein distance \citeApp{_miller2009advanced}, fingerprint Tanimoto similarity (we use MACCS fingerprint \citeApp{_durant2002reoptimization}, Morgan fingerprint \citeApp{_rogers2010extended}, RDKit fingerprint \citeApp{_schneider2015get}) and Text2Mol score. For SciBERT and MoMu, we feed the outputs of the 6th transformer layer into the MolT5 decoder to ensure that they contain the same amount of parameters as MolFM's text encoder.

\textbf{Molecular property prediction.} We adopt classification datasets in MoleculeNet, a widely used molecular property benchmark. Tab. \ref{tab:mp} provides a summary of the dataset statistics. We follow the same Scaffold split as \citeApp{_graphmvp} with a train/validation/test ratio of 8:1:1. To obtain knowledge inputs for each molecule in the dataset, we first compare the isomeric SMILES to molecules in the knowledge graph for an exact match. If there is no exact match, we select a molecule entity in our knowledge graph that has the highest RDKit fingerprint Tanimoto similarity. If the fingerprint similarity is not greater than 0.8, the knowledge input will be a \textit{"null"} entity with random embeddings. For additional text inputs for each molecule in the dataset, we compare the isomeric SMILES to molecules in ChEBI-20 to find an exact match and obtain the corresponding description. If there is no exact match, the text input will be \textit{"No description for the drug is available"}. During fine-tuning, we perform additional hyper-parameter search on the dropout ratio of MolFM's structure encoder from $\{0, 0.1, 0.3, 0.5\}$. 

\begin{table}[htpb]
\centering
\caption{Summary of molecular property prediction datasets. \# Molecules: number of molecules. \# Tasks: number of prediction objectives. \# Linked to KG: number of molecules that we obtain knowledge graph inputs. \# Linked to text: number of molecules that we obtain text inputs.}
\label{tab:mp}
\begin{tabular}{lllllllll}
\toprule
Dataset      & BBBP  & Tox21 & ToxCast & SIDER & ClinTox & MUV    & HIV    & BACE  \\
\midrule
\# Molecules & 2,039 & 7,831 & 8,597   & 1,427 & 1,478   & 93,807 & 41,127 & 1,513 \\
\# Tasks     & 1     & 12    & 617     & 27    & 2       & 17     & 1      & 1     \\
\midrule
\# Linked to KG & 1,605 & 6,328 & 6,892 & 1,140 & 1,151 & 9,006 & 8,131 & 232 \\
\# Linked to text & 599 & 2,537 & 2,538 & 599 & 585 & 219 & 426 & 3 \\
\bottomrule
\end{tabular}
\end{table}

\subsection{Additional experiments \label{sec:app_exp}}
Tab. \ref{cmr-para-s-t} and Tab. \ref{cmr-para-t-s} show the paragraph-level cross-modal retrieval results and error bars under fine-tuning setting. Tab. \ref{cmr-sent-s-t} and Tab. \ref{cmr-sent-t-s} show the sentence-level cross-modal retrieval results and error bars under zero-shot and fine-tuning settings. Tab. \ref{molcap1} and Tab. \ref{molcap2} present the molecule captioning results and error bars. Tab. \ref{t2mgen1} and Tab. \ref{t2mgen2} display the text-based molecule generation results and error bars.

In addition, we conduct ablation studies on the number of neighbors $N$. We pre-train MolFM with different choices of $N$ and evaluate the zero-shot paragraph-level cross-modal retrieval performance, as shown in Fig. \ref{fig:ablation_neighbor}. We observe that when $N\le 4$, aggregating information from more neighbors slightly improves the retrieval performance. However, when $N>4$, increasing $N$ has little impact on our model, which can be attributed to two reasons. Firstly, the sparsity of our knowledge graph results in only a few entities being connected to more than 4 neighbors. Secondly, the interaction relationships between molecules and other entities may exhibit certain patterns or dependencies. Hence, including additional neighbors beyond a certain point may introduce redundant information that does not provide substantial benefits to the representation learning of our model.

\begin{table}[htpb!]
\centering
\caption{Fine-tuned paragraph-level structure-to-text (S-T) retrieval results on the test split of PCdes. }
\label{cmr-para-s-t}
\begin{tabular}{lcccc}
\toprule
Model    & MRR & R@1 & R@5 & R@10 \\
\midrule
SciBERT  & 24.98$_{\pm 0.88}$ & 16.32$_{\pm 0.92}$ & 33.91$_{\pm 0.88}$ & 42.64$_{\pm 1.88}$      \\
KV-PLM   & 27.41$_{\pm 0.80}$ & 18.35$_{\pm 0.70}$ & 37.15$_{\pm 1.19}$ & 45.43$_{\pm 0.79}$      \\
KV-PLM*  & 29.15$_{\pm 0.47}$ & 20.60$_{\pm 0.53}$ & 37.87$_{\pm 0.65}$ & 45.74$_{\pm 0.56}$      \\
GraphMVP & 31.57$_{\pm 0.64}$ & 23.26$_{\pm 0.67}$ & 40.21$_{\pm 0.41}$ & 47.39$_{\pm 0.63}$      \\
MoMu     & 34.29$_{\pm 0.69}$ & 24.47$_{\pm 0.64}$ & 45.38$_{\pm 1.25}$  & 53.84$_{\pm 0.83}$      \\
MolFM    & \textbf{39.56$_{\pm 0.64}$} & \textbf{29.76$_{\pm 0.70}$}    & \textbf{50.53$_{\pm 0.38}$}    & \textbf{58.63$_{\pm 0.26}$}  \\
\bottomrule
\end{tabular}
\end{table}

\begin{table}[htpb!]
\centering
\caption{Fine-tuned paragraph-level text-to-structure (T-S) retrieval results on the test split of PCdes. }
\label{cmr-para-t-s}
\begin{tabular}{lcccc}
\toprule
Model    & MRR & R@1 & R@5 & R@10 \\
\midrule
SciBERT  & 23.92$_{\pm 0.80}$ & 14.97$_{\pm 0.79}$ & 34.05$_{\pm 1.03}$ & 41.74$_{\pm 1.88}$      \\
KV-PLM   & 25.97$_{\pm 1.04}$ & 16.55$_{\pm 1.25}$ & 35.85$_{\pm 1.15}$ & 44.75$_{\pm 0.86}$      \\
KV-PLM*  & 28.12$_{\pm 0.49}$ & 19.29$_{\pm 0.45}$ & 37.33$_{\pm 0.53}$ & 45.29$_{\pm 0.26}$      \\
GraphMVP & 30.93$_{\pm 0.40}$ & 21.94$_{\pm 0.52}$ & 40.28$_{\pm 0.25}$ & 47.90$_{\pm 0.39}$      \\
MoMu     & 34.53$_{\pm 1.54}$ & 24.87$_{\pm 1.55}$ & 44.93$_{\pm 1.51}$ & 54.25$_{\pm 1.27}$      \\
MolFM    & \textbf{39.34$_{\pm 0.70}$} & \textbf{29.39$_{\pm 0.81}$}    & \textbf{50.26$_{\pm 0.65}$}    & \textbf{58.49$_{\pm 0.98}$}  \\
\bottomrule
\end{tabular}
\end{table}

\begin{table}[htpb!]
\centering
\caption{Sentence-level structure-to-text (S-T) retrieval results on the test split of PCdes. }
\label{cmr-sent-t-s}
\begin{tabular}{llllll}
\toprule
Mode & Model    & MRR & R@1 & R@5 & R@10 \\
\midrule
\multirow{2}{*}{zero-shot} & MoMu & 5.95 & 3.05 & 7.24 & 10.97 \\
& MolFM & \textbf{12.54} & \textbf{8.00} & \textbf{16.10} & \textbf{21.23} \\
\midrule
\multirow{6}{*}{fine-tune} & SciBERT  & 12.27$_{\pm 0.27}$ & 6.59$_{\pm 0.21}$ & 17.26$_{\pm 0.27}$ & 23.16$_{\pm 0.41}$      \\
& KV-PLM  & 12.93$_{\pm 0.91}$ & 7.15$_{\pm 1.01}$ & 17.84$_{\pm 0.70}$ & 23.88$_{\pm 0.41}$      \\
& KV-PLM*   & 14.59$_{\pm 0.29}$ & 8.64$_{\pm 0.24}$ & 19.98$_{\pm 0.54}$ & 26.22$_{\pm 0.42}$      \\
& GraphMVP & 14.76$_{\pm 1.09}$ & 8.96$_{\pm 1.01}$ & 19.84$_{\pm 1.42}$ & 25.70$_{\pm 1.16}$      \\
& MoMu     & 19.91$_{\pm 0.66}$ & 12.98$_{\pm 0.81}$ & 26.66$_{\pm 0.81}$  & 33.64$_{\pm 0.66}$      \\
& MolFM    & \textbf{21.14$_{\pm 0.80}$} & \textbf{14.09$_{\pm 0.75}$}    & \textbf{28.18$_{\pm 0.82}$}    & \textbf{35.31$_{\pm 0.68}$}  \\
\bottomrule
\end{tabular}
\end{table}

\begin{table}[htpb!]
\centering
\caption{Sentence-level text-to-structure (T-S) retrieval results on the test split of PCdes. }
\label{cmr-sent-s-t}
\begin{tabular}{llllll}
\toprule
Mode & Model    & MRR & R@1 & R@5 & R@10 \\
\midrule
\multirow{2}{*}{zero-shot} & MoMu  & 6.18 & 3.01 & 7.73 & 12.37 \\
& MolFM & \textbf{13.48} & \textbf{8.23} & \textbf{17.76} & \textbf{22.98} \\
\midrule
\multirow{6}{*}{fine-tune} & SciBERT  & 11.79$_{\pm 0.42}$ & 6.25$_{\pm 0.45}$ & 16.41$_{\pm 0.40}$ & 22.46$_{\pm 0.23}$      \\
& KV-PLM  & 12.29$_{\pm 0.83}$ & 6.71$_{\pm 0.83}$ & 16.79$_{\pm 0.88}$ & 23.49$_{\pm 0.28}$      \\
& KV-PLM*   & 14.24$_{\pm 0.26}$ & 8.28$_{\pm 0.15}$ & 19.72$_{\pm 0.30}$ & 26.28$_{\pm 0.42}$      \\
& GraphMVP & 14.75$_{\pm 1.20}$ & 9.04$_{\pm 1.02}$ & 19.73$_{\pm 1.79}$ & 25.60$_{\pm 1.72}$      \\
& MoMu     & 20.10$_{\pm 1.07}$ & 13.23$_{\pm 1.09}$ & 26.81$_{\pm 1.32}$ & 33.76$_{\pm 1.11}$      \\
& MolFM    & \textbf{21.54$_{\pm 0.11}$} & \textbf{14.49$_{\pm 0.24}$}    & \textbf{28.46$_{\pm 0.46}$}    & \textbf{35.82$_{\pm 0.35}$}  \\
\bottomrule
\end{tabular}
\end{table}

\begin{table}[htpb]
\centering
\caption{BELU and ROUGE scores of molecule captioning on the test split of ChEBI-20. $^{\dag}$: These results are taken from \cite{_edwards2022translation}.}
\label{molcap1}
\setlength\tabcolsep{2pt}
\begin{tabular}{lllllll}
\toprule
Decoder     & Encoder     & BLEU-2 & BLEU-4 & ROUGE-1 & ROUGE-2 & ROUGE-L \\
\midrule
\multirow{4}{*}{MolT5-small} & MolT5-small$^{\dag}$   & 0.519  & 0.436  & 0.620   & \textbf{0.469}   & 0.563   \\
& MoMu  & 0.532$_{\pm 0.001}$  & 0.445$_{\pm 0.000}$  & 0.621$_{\pm 0.000}$   & \textbf{0.469$_{\pm 0.000}$}   & \textbf{0.564$_{\pm 0.001}$}   \\
& GraphMVP  & 0.540$_{\pm 0.002}$  & 0.449$_{\pm 0.001}$  & 0.619$_{\pm 0.002}$   & 0.465$_{\pm 0.002}$   & 0.560$_{\pm 0.001}$    \\
& MolFM       & \textbf{0.542$_{\pm 0.002}$}  & \textbf{0.452$_{\pm 0.001}$}  & \textbf{0.623$_{\pm 0.001}$}   & \textbf{0.469$_{\pm 0.001}$}   & 0.562$_{\pm 0.002}$    \\
\midrule
\multirow{4}{*}{MolT5-base} & MolT5-base$^{\dag}$   & 0.540  & 0.457  & 0.634   & 0.485   & 0.578 \\
& MoMu   & 0.549$_{\pm 0.000}$  & 0.462$_{\pm 0.000}$  & 0.630$_{\pm 0.001}$   & 0.479$_{\pm 0.000}$    & 0.575$_{\pm 0.000}$  \\
& GraphMVP    & 0.577$_{\pm 0.003}$  & 0.491$_{\pm 0.002}$  & 0.651$_{\pm 0.002}$  & 0.505$_{\pm 0.002}$    & 0.592$_{\pm 0.002}$  \\
& MolFM    & \textbf{0.585$_{\pm 0.002}$}  & \textbf{0.498$_{\pm 0.001}$}  & \textbf{0.653$_{\pm 0.002}$}   & \textbf{0.508$_{\pm 0.001}$}   & \textbf{0.594$_{\pm 0.002}$} \\
\bottomrule
\end{tabular}
\end{table}

\begin{figure}[htpb!]
    \centering
    \begin{minipage}{0.55\linewidth}
    \centering
    \caption{MEATOR and Text2Mol scores of molecule captioning on the test split of ChEBI-20. $^{\dag}$: These results are taken from \cite{_edwards2022translation}.}
    \label{molcap2}
    \begin{tabular}{llll}
    \toprule
    Decoder     & Encoder     & METEOR & Text2Mol \\
    \midrule
    \multirow{4}{*}{\makecell[l]{MolT5\\-small}} & MolT5-small$^{\dag}$   & 0.551  & 0.540    \\
    & MoMu   & 0.557$_{\pm 0.001}$  & 0.543$_{\pm 0.001}$    \\
    & GraphMVP & 0.562$_{\pm 0.002}$  & 0.553$_{\pm 0.003}$    \\
    & MolFM  & \textbf{0.564$_{\pm 0.002}$}  & \textbf{0.557$_{\pm 0.002}$}    \\
    \midrule
    \multirow{4}{*}{\makecell[l]{MolT5\\-small}} & MolT5-base$^{\dag}$   & 0.569  & 0.547    \\
    & MoMu  & 0.576$_{\pm 0.001}$  & 0.558$_{\pm 0.000}$    \\
    & GraphMVP    & 0.599$_{\pm 0.003}$  & 0.570$_{\pm 0.002}$        \\
    & MolFM    & \textbf{0.607$_{\pm 0.002}$}  & \textbf{0.576$_{\pm 0.002}$}   \\
    \bottomrule
    \end{tabular}
    \end{minipage}
    \hfill
    \begin{minipage}{0.4\linewidth}
        \caption{Cross-modal retrieval results with different number of sampled neighbors $N$.}
        \includegraphics[width=\linewidth]{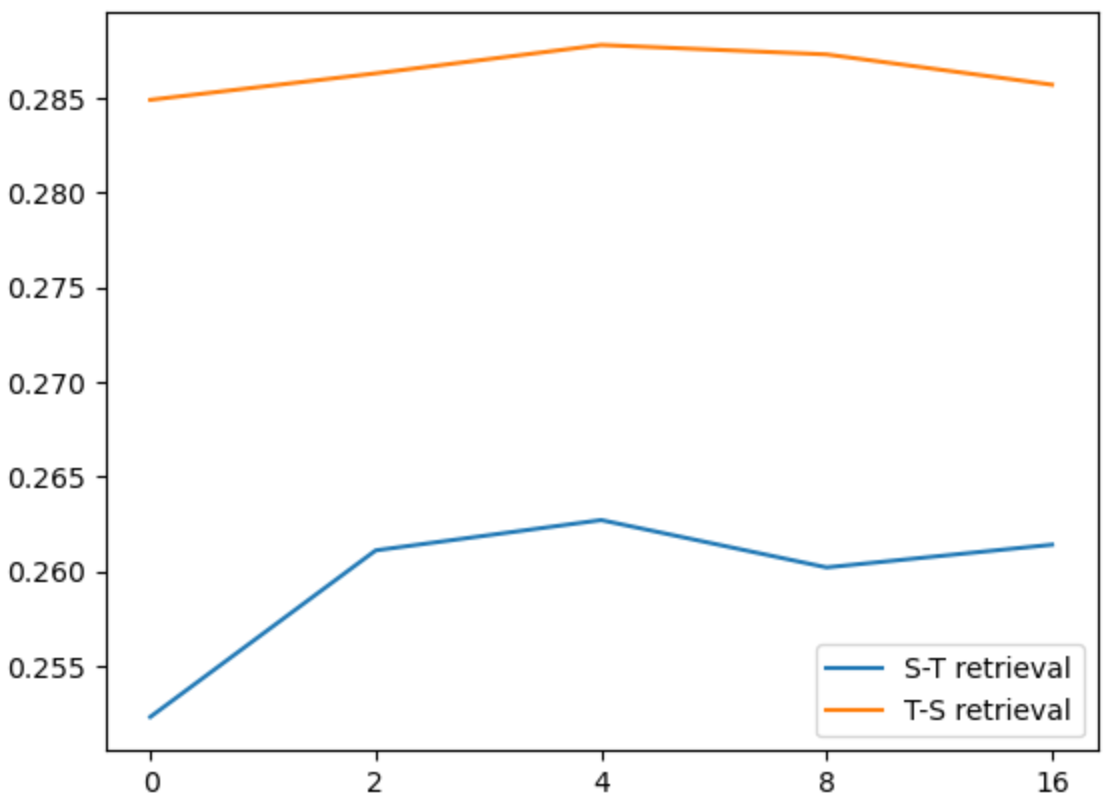}
        \label{fig:ablation_neighbor}
    \end{minipage}
    
\end{figure}
\begin{table}[htpb!]
\centering

\setlength\tabcolsep{2pt}

\end{table}

\begin{table}[htpb!]
\setlength\tabcolsep{3pt}
\centering
\caption{Text-based molecule generation results on the test split of ChEBI-20. $\uparrow$: The higher the better. $\downarrow$: The lower the better. $^{\dag}$: These results are taken from \cite{_edwards2022translation}.}
\label{t2mgen1}
\begin{tabular}{llllll}
\toprule
Decoder     &  Encoder                      & BLEU $\uparrow$  & Exact $\uparrow$ & Valid $\uparrow$ &  Levenshtein $\downarrow$ \\
\midrule
\multirow{4}{*}{MolT5-small} & MolT5-small  & 0.749 & 0.081 & 0.724 & 29.160    \\
& SciBERT   & 0.797$_{\pm 0.002}$ & 0.142$_{\pm 0.015}$ & 0.846$_{\pm 0.017}$ & 22.027$_{\pm 0.645}$    \\
& MoMu   & 0.800$_{\pm 0.003}$ & 0.150$_{\pm 0.017}$ & 0.858$_{\pm 0.011}$ & 21.446$_{\pm 0.733}$ \\
& MolFM   & \textbf{0.803$_{\pm 0.002}$} & \textbf{0.169$_{\pm 0.012}$} & \textbf{0.859$_{\pm 0.008}$} & \textbf{20.868$_{\pm 0.598}$}  \\
\midrule
\multirow{4}{*}{MolT5-base} & MolT5-base  & 0.779 & 0.082 & 0.786      & 25.188 \\
& SciBERT  & 0.812$_{\pm 0.002}$ & 0.179$_{\pm 0.011}$ & 0.852$_{\pm 0.014}$     & 21.192$_{\pm 0.612}$   \\
& MoMu  & 0.815$_{\pm 0.002}$ & 0.183$_{\pm 0.014}$ & 0.863$_{\pm 0.014}$     & 20.520$_{\pm 0.757}$        \\
& MolFM    & \textbf{0.822$_{\pm 0.002}$} & \textbf{0.210$_{\pm 0.013}$} & \textbf{0.892$_{\pm 0.012}$} & \textbf{19.445$_{\pm 0.745}$} \\
\bottomrule
\end{tabular}
\end{table}

\begin{table}[htpb!]
\setlength\tabcolsep{3pt}
\centering
\caption{Text-based molecule generation results on the test split of ChEBI-20. $\uparrow$: The higher the better. $\downarrow$: The lower the better. $^{\dag}$: These results are taken from \cite{_edwards2022translation}.}
\label{t2mgen2}
\begin{tabular}{llllll}
\toprule
Decoder     &  Encoder                      & MACCS FTS $\uparrow$ & RDKit FTS $\uparrow$ & Morgan FTS $\uparrow$ & Text2Mol $\uparrow$ \\
\midrule
\multirow{4}{*}{MolT5-small} & MolT5-small$^{\dag}$  & 0.780     & 0.653   & 0.601      & 0.533    \\
& SciBERT  & 0.818$_{\pm 0.005}$     & 0.695$_{\pm 0.009}$   & 0.639$_{\pm 0.016}$      & 0.561$_{\pm 0.007}$    \\
& MoMu   & 0.818$_{\pm 0.007}$     & 0.709$_{\pm 0.010}$   & 0.651$_{\pm 0.009}$      & 0.566$_{\pm 0.004}$    \\
& MolFM    & \textbf{0.834$_{\pm 0.006}$}     & \textbf{0.721$_{\pm 0.008}$}   & \textbf{0.662$_{\pm 0.011}$}      & \textbf{0.573$_{\pm 0.004}$}    \\
\midrule
\multirow{4}{*}{MolT5-base} & MolT5-base$^{\dag}$  & 0.787 & 0.661  & 0.601     & 0.543        \\
& SciBERT   & 0.844$_{\pm 0.008}$         & 0.733$_{\pm 0.011}$       & 0.678$_{\pm 0.012}$          & 0.575$_{\pm 0.005}$        \\
& MoMu  & 0.847$_{\pm 0.006}$         & 0.737$_{\pm 0.013}$       & 0.678$_{\pm 0.010}$          & 0.580$_{\pm 0.003}$        \\
& MolFM     & \textbf{0.854$_{\pm 0.005}$}     & \textbf{0.758$_{\pm 0.012}$}   & \textbf{0.697$_{\pm 0.009}$}      & \textbf{0.583$_{\pm 0.004}$}    \\
\bottomrule
\end{tabular}
\end{table}

\newpage
\subsection{Additional downstream task cases}

\subsubsection{Cross-modal retrieval \label{sec:app_case_mtr}}

Fig. \ref{fig:st_suppl} and Fig. \ref{fig:ts_suppl} show comparisons between MolFM and MoMu on structure-to-text retrieval and text-to-structure retrieval. We present the structure or text inputs, the top-3 retrieved results for two models, along with the prediction scores and whether the retrieved candidates hit the ground truth. 
\begin{figure*} [htpb]
\centering
    \includegraphics[width=0.95\linewidth]{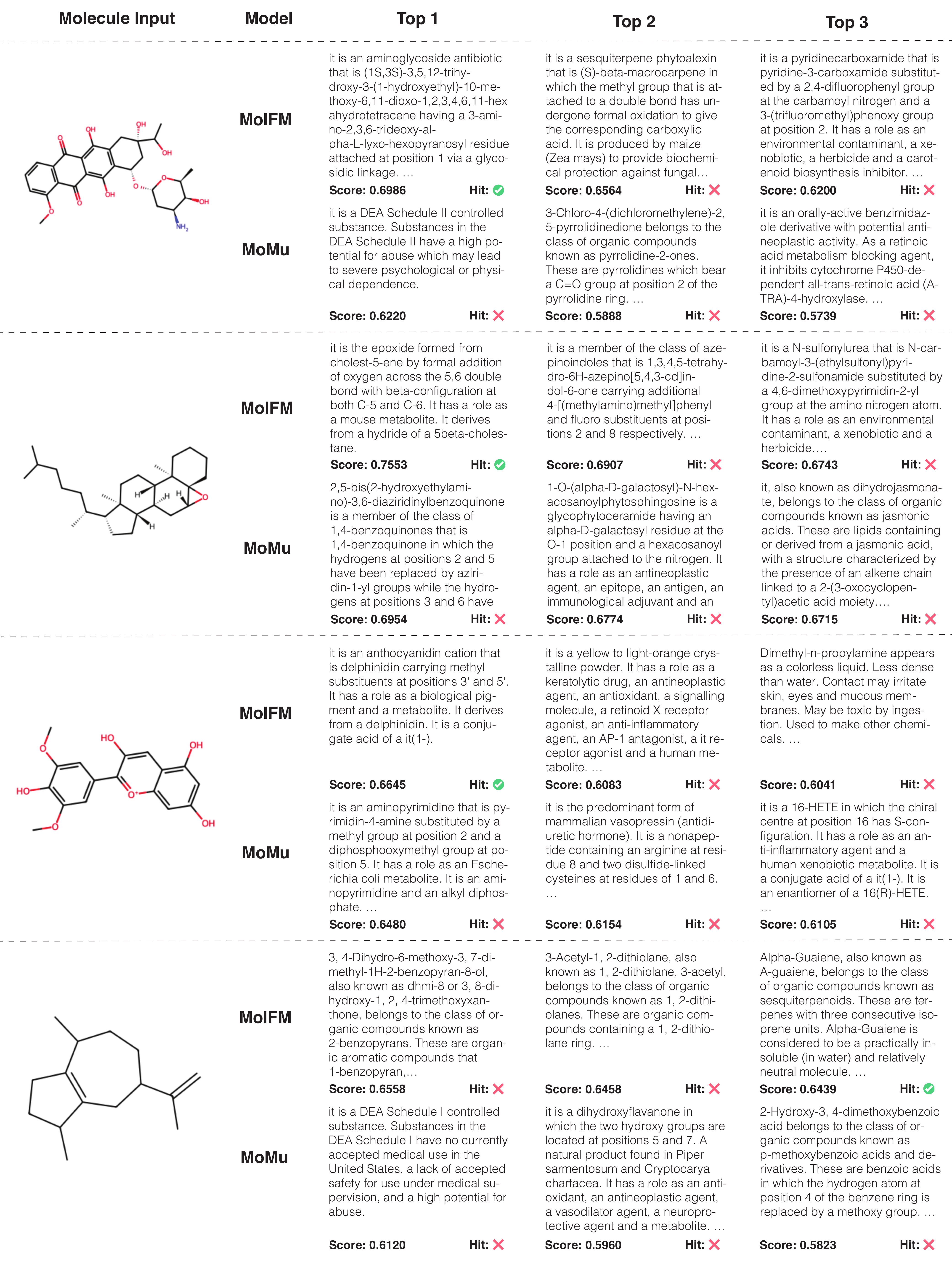}
\caption{Structure-to-text retrieval examples.}
\label{fig:st_suppl}
\end{figure*}

\newpage
\begin{figure*} [htpb]
\centering
    \includegraphics[width=\linewidth]{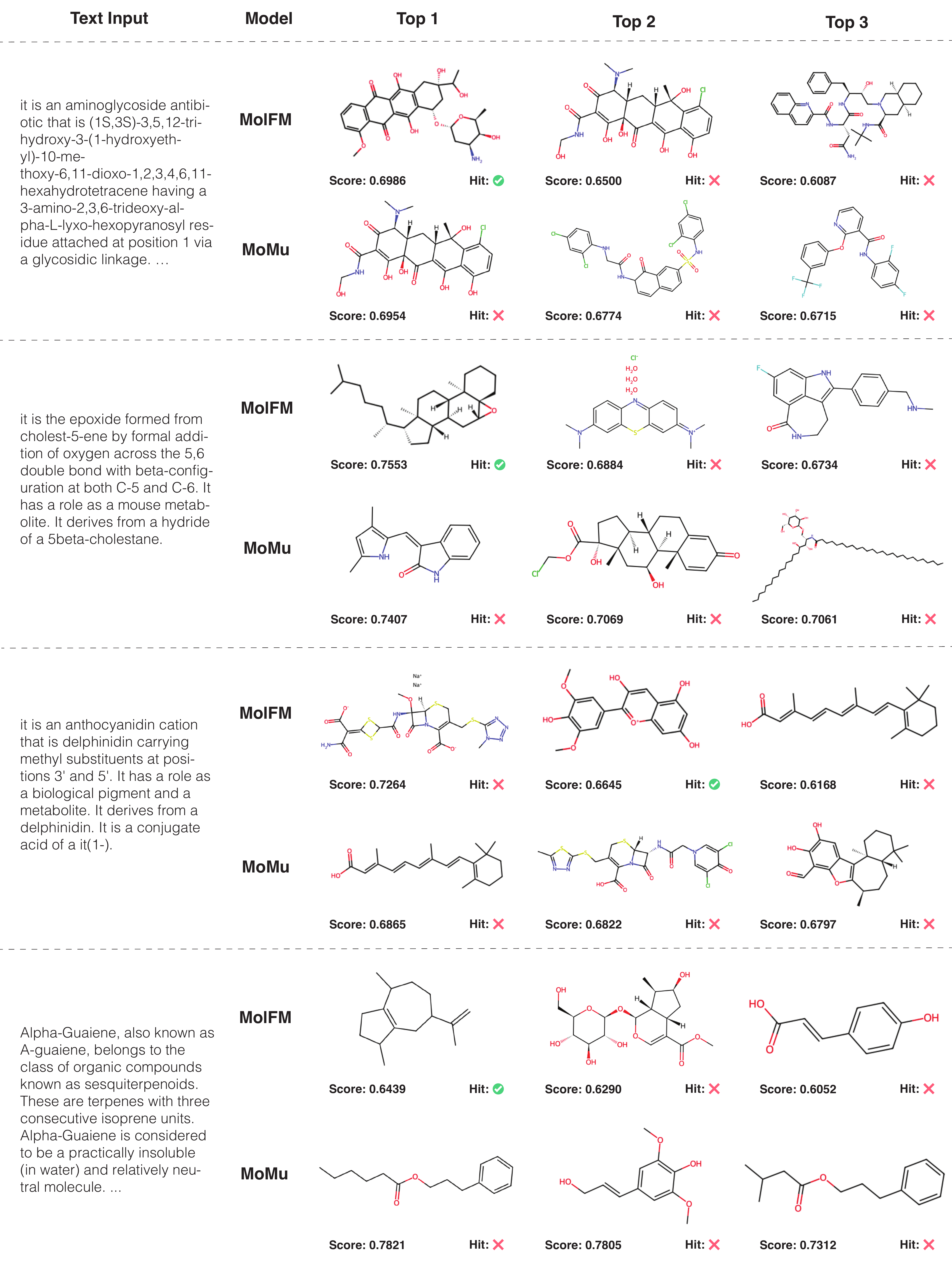}
\caption{Text-to-structure retrieval examples.}
\label{fig:ts_suppl}
\end{figure*}

\newpage
\subsubsection{Molecule captioning \label{sec:app_case_mp}}

In Fig. \ref{fig:molcap_suppl}, we illustrate diverse molecules as well as the molecule captioning results of different models. We highlight the text fragments where MolFM generates more accurate expressions that shares similar or exact semantics with the ground truth. However, such contents are missing or incorrect in the outputs of other models. 
\begin{figure*} [htpb]
\centering
    \includegraphics[width=\linewidth]{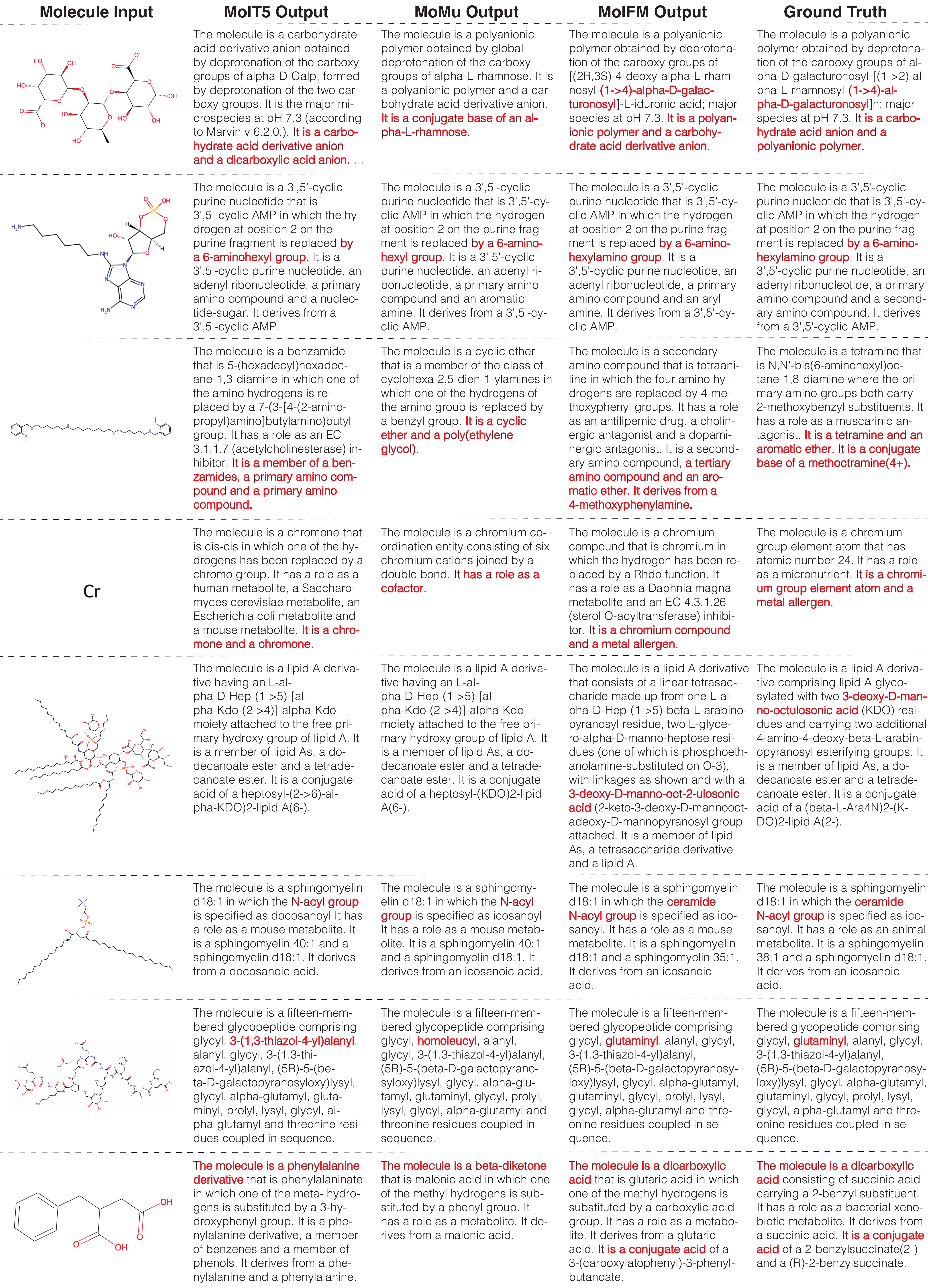}
\caption{Additional molecule captioning examples.}
\label{fig:molcap_suppl}
\end{figure*}

\subsubsection{Text-to-molecule generation \label{sec:app_case_t2m}}

Fig. \ref{fig:text2smi_suppl} shows text-to-molecule generation results of different models. We also calculate Morgan fingerprint Tanimoto similarity between the generated molecules and the ground truth.
\begin{figure*} [htpb!]
\centering
    \includegraphics[width=0.93\linewidth]{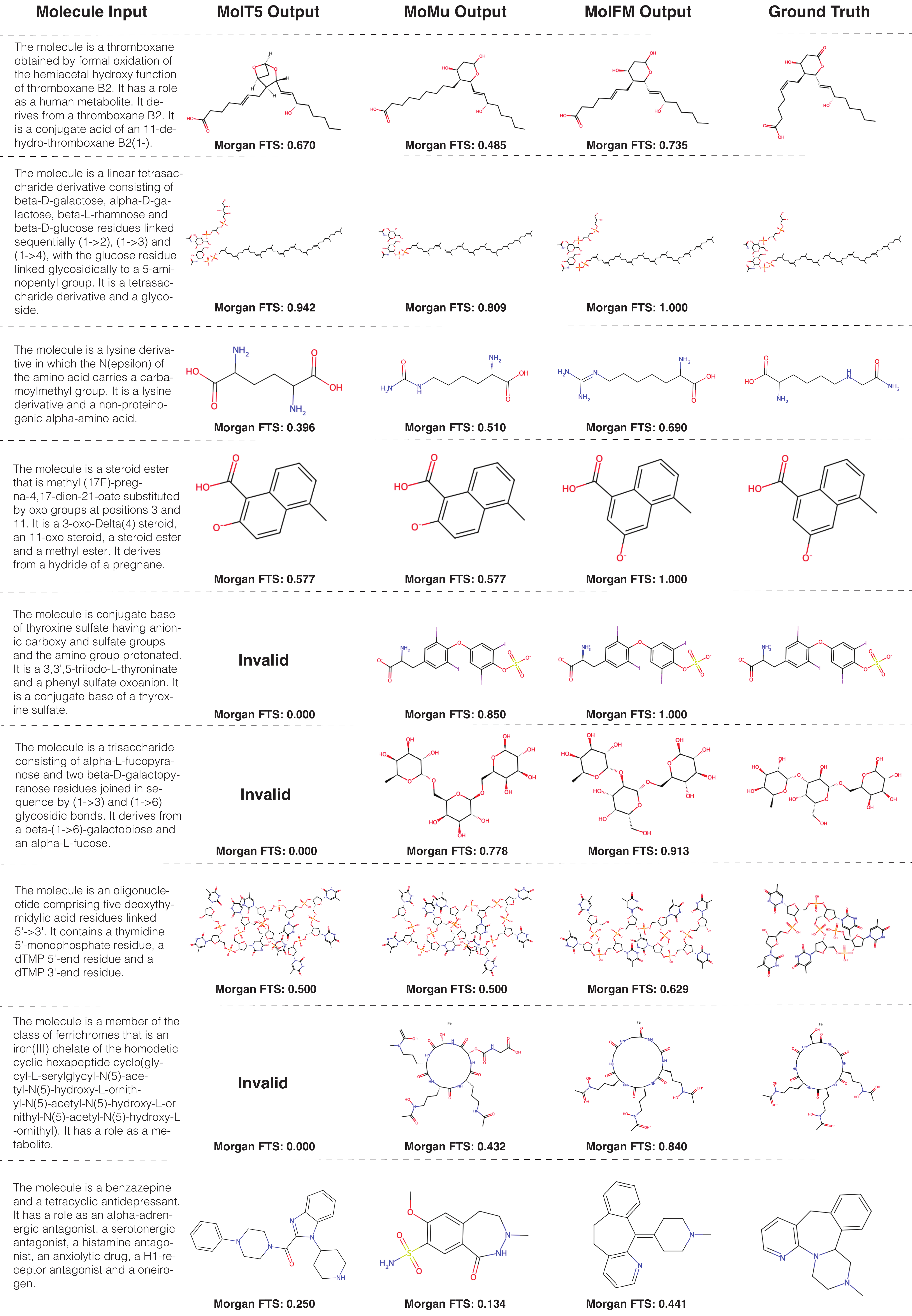}
\caption{Additional text-to-molecule generation cases. "Invalid" indicates that the generated SMILES can not be converted to a 2D molecular graph.}
\label{fig:text2smi_suppl}
\end{figure*}

\subsection{Additional visualization of cross-modal attention \label{sec:app_vis_attn}}
Fig. \ref{fig:attn_atom_appendix} shows the normalized cross-modal attention from texts to atoms. Fig. \ref{fig:attn_neigh_appendix} shows the normalized cross-modal attention from texts to neighbors in the knowledge graph.

\begin{figure*} [htpb!]
\centering
    \includegraphics[width=\linewidth]{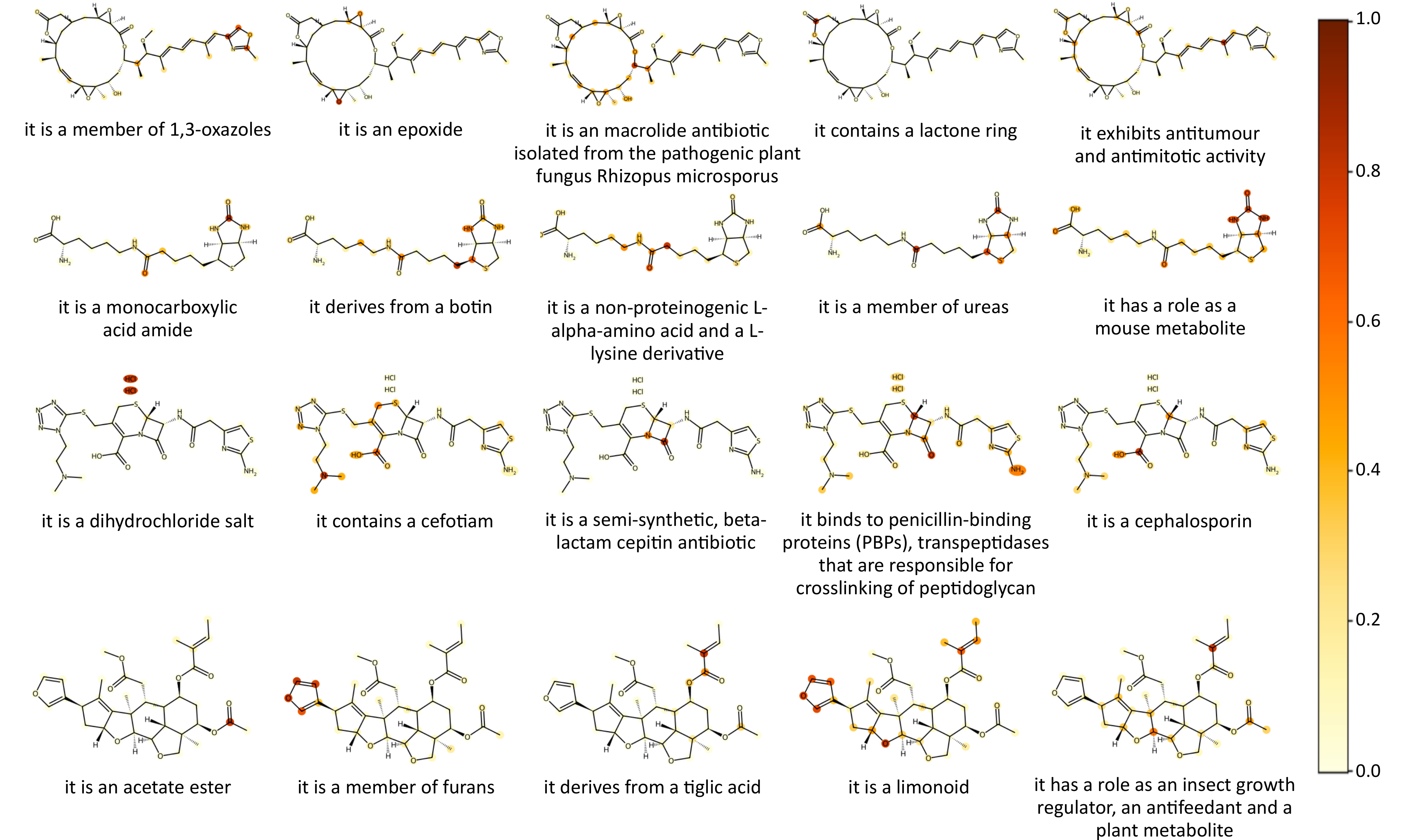}
\caption{Additional visualization of cross-modal attention from texts to atoms. }
\label{fig:attn_atom_appendix}
\end{figure*}

\begin{figure*} [htpb!]
\centering
    \includegraphics[width=\linewidth]{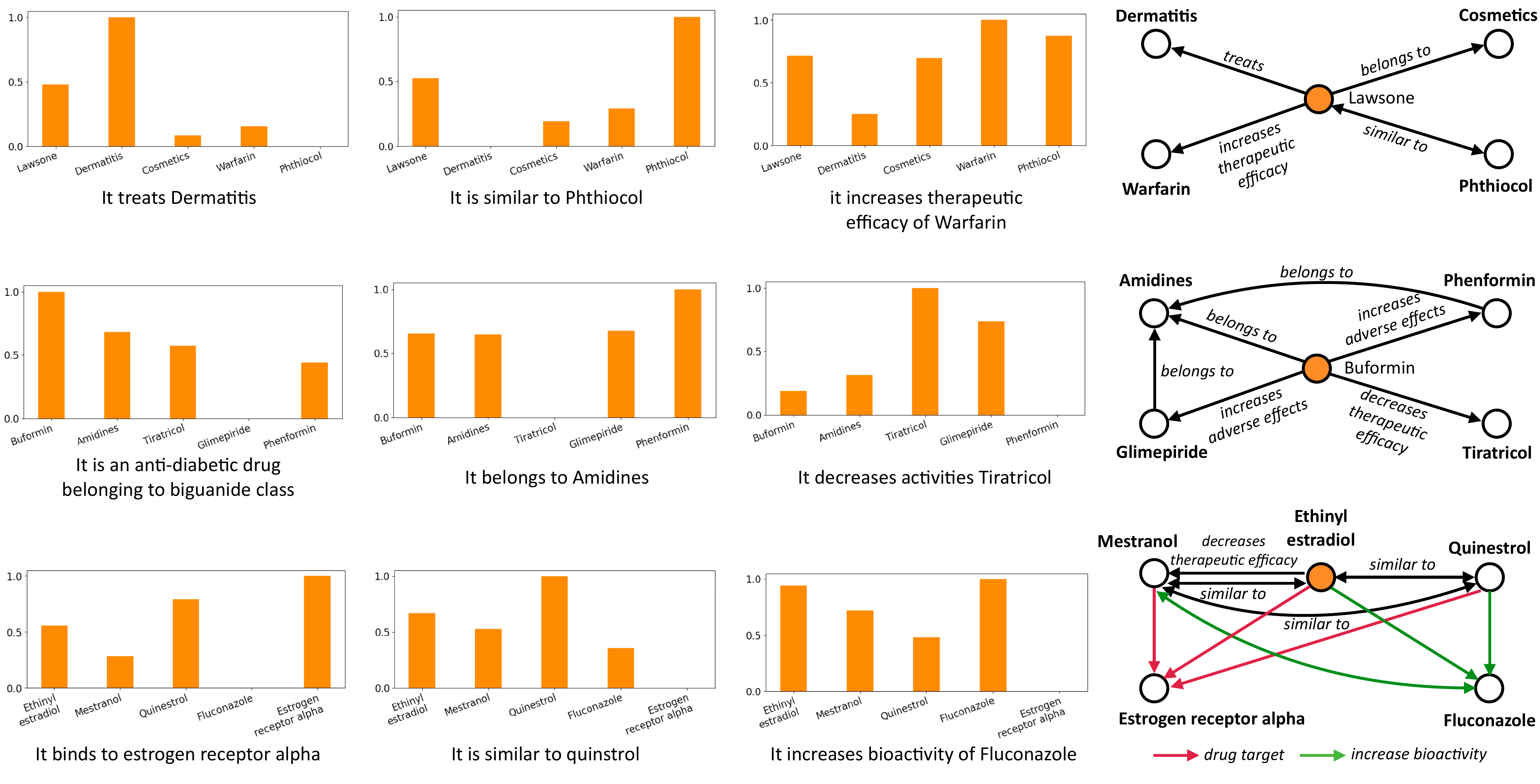}
\caption{Additional visualization of cross-modal attention from texts to neighbors. \textbf{Left}: the input text and the normalized attention to each entity. \textbf{Right}: the selected molecule (orange) and 4 randomly sampled neighboring entities, as well as relationships between these entities.}
\label{fig:attn_neigh_appendix}
\end{figure*}
\newpage

%%%%%%%%%%%%%%%%%%%%%%%%%%%%%%%%%%%%%%%%%%%%%%%%%%%%%%%%%%%%
\bibliographyApp{refappendix}

%%%%%%%%%%%%%%%%%%%%%%%%%%%%%%%%%%%%%%%%%%%%%%%%%%%%%%%%%%%%

\end{document}